
\documentclass[12pt]{article}
\pdfoutput=1

\setlength{\topmargin}{-.7in}
\setlength{\oddsidemargin}{.0in}
\setlength{\textwidth}{16.7cm}
\setlength{\textheight}{22.5cm}
\setlength{\footnotesep}{\baselinestretch\baselineskip}
\newlength{\abstractwidth}
\setlength{\abstractwidth}{\textwidth}
\addtolength{\abstractwidth}{-6pc}

\usepackage{amscd,amsmath,amssymb,amsfonts,xspace,mathrsfs,amsthm,dsfont,bbm}
\usepackage{color}

\usepackage{bbold}
\usepackage{latexsym}
\usepackage{graphicx}
\usepackage{dsfont}
\usepackage{color}
\usepackage{longtable}
\usepackage{hyperref}

%\theoremstyle{plain}
 %[section]
%\numberwithin{theorem}{section}{rel-gE}
  \newtheorem{proposition}{Proposition}%[section]
    %[section]
  
  \newtheorem{conj}{Conjecture}
%\theoremstyle{definition}
% \newtheorem{definition}[theorem]{Definition}%[section]
%\theoremstyle{remark}
%\newtheorem{remark}[theorem]{Remark}
%\numberwithin{remark}{section}

%%%%%%%%%%%%%%%%%%%% Sizes %%%%%%%%%%%%%%%%%
\hoffset=-1in
\voffset=-1in
\textwidth=16.7cm
\oddsidemargin=2.2cm
\evensidemargin=2.2cm
\topmargin=0.2cm
\textheight=23.8cm
\setcounter{tocdepth}{2}

%%%%%%%%%%%%%%%%%%%% Colors %%%%%%%%%%%%%%%%%%%%%%%%%%%%%%

%\def\gr{\color[named]{OliveGreen} }

\numberwithin{equation}{section}

\def\bea{\begin{eqnarray}}
\def\eea{\end{eqnarray}}
\def\be{\begin{equation}}
\def\ee{\end{equation}}
\def\ba{\begin{align}}
\def\ea{\end{align}}
\def\bse{\begin{subequations}}
\def\ese{\end{subequations}}

\newcommand{\nn}{\nonumber}

\newcommand{\Li}{{\rm Li}}

\def\Sym{\,{\rm Sym}\, }

\def\Im{\,{\rm Im}\,}
\def\Re{\,{\rm Re}\,}

\def\({\left(}
\def\){\right)}
\def\[{\left[}
\def\]{\right]}
\def\<{\left\langle}
\def\>{\right\rangle}
\def\hf{{1\over 2}}

\newcommand{\p}{\partial}
\newcommand{\unit}{{\mathbbm{1}}}

\renewcommand{\d}{\mathrm{d}}
\newcommand{\de}{\mathrm{d}}

\newcommand{\I}{\mathrm{i}}

\newcommand{\eps}{\epsilon}

\newcommand{\vth}{\vartheta}
\newcommand{\hk}{{hyperk\"ahler}\xspace}
\newcommand{\qk}{{quaternion-K\"ahler}\xspace}

\newcommand{\cD}{\mathcal{D}}

\newcommand{\cS}{\mathcal{S}}

\newcommand{\cM}{\mathcal{M}}
\newcommand{\cW}{\mathcal{W}}
\newcommand{\cN}{\mathcal{N}}

\newcommand{\cX}{\mathcal{X}}

\newcommand{\cT}{\mathcal{T}}
\newcommand{\cJ}{\mathcal{J}}
\newcommand{\cZ}{\mathcal{Z}}

\newcommand{\cO}{\mathcal{O}}

\newcommand{\cU}{\mathcal{U}}

\newcommand{\IT}{\mathds{T}}
\newcommand{\IR}{\mathds{R}}
\newcommand{\IC}{\mathds{C}}
\newcommand{\IZ}{\mathds{Z}}

\newcommand{\IP}{\mathds{P}}

\newcommand{\sgn}{\mbox{sgn}}

\def\tcJ{\tilde\cJ}

\def\cl0{\tilde c_0}

\def\bOm{\bar\Omega}

\def\ba{\bar a}

\def\bZ{\bar Z}

\def\CY{\mathfrak{Y}}

\DeclareMathOperator{\ch}{ch}

\def\cXsf{\cX^{\rm sf}}
\def\cXcl{\cX^{\rm sf}}

\def\Hcl{\cX^{\rm cl}}

\def\cXr{\cX^{\rm (ref)}}
\def\Kr{K^{\rm (ref)}}
\def\Wr{W^{\rm (ref)}}
\def\Fr{F^{\rm (ref)}}

\def\dm{m}
\def\fl#1{\lfloor #1\rfloor}
\def\IS{\lefteqn{\textstyle\sum}\int}

\begin{document}

\thispagestyle{empty}

\begin{flushright}
2104.10540v2
\end{flushright}
\vskip 0.3in

\begin{center}
{\Large \bf Heavenly metrics, BPS indices and twistors}
\vskip 0.2in

{\large Sergei Alexandrov$^\dagger$ and Boris Pioline$^\ddagger$}

\vskip 0.15in

$^\dagger$ {\it
Laboratoire Charles Coulomb (L2C), Universit\'e de Montpellier,
CNRS, F-34095, Montpellier, France}

\vskip 0.15in

$^\ddagger$ {\it Laboratoire de Physique Th\'eorique et Hautes
Energies (LPTHE), UMR 7589 CNRS-Sorbonne Universit\'e,
Campus Pierre et Marie Curie,
4 place Jussieu, F-75005 Paris, France} \\

\vskip 0.15in

{\tt \small sergey.alexandrov@umontpellier.fr,  pioline@lpthe.jussieu.fr}

\vskip 0.2in

\begin{abstract}
Recently T. Bridgeland
defined a complex hyperk\"ahler metric on the tangent bundle over the space of stability
conditions of a triangulated category, based on a Riemann-Hilbert problem determined by the Donaldson-Thomas invariants.
This metric is encoded in a function $W(z,\theta)$ satisfying a heavenly equation, or
a potential $F(z,\theta)$ satisfying an isomonodromy equation.
After recasting the RH problem into a system of TBA-type equations,
we obtain integral expressions for both $W$ and $F$ in terms of solutions of that system. These expressions are
recognized as conformal limits of the `instanton generating potential' and `contact potential'
appearing in studies of D-instantons and BPS black holes.
By solving the TBA equations iteratively,
we reproduce Joyce's original construction of $F$ as a formal series in the rational DT invariants.
Furthermore, we produce similar solutions to deformed versions of the heavenly and isomonodromy equations
involving a non-commutative star-product.
In the case of a finite uncoupled BPS structure, we rederive the results previously obtained
by Bridgeland and obtain the so-called $\tau$ function for arbitrary values of the fiber coordinates $\theta$,
in terms of a suitable two-variable generalization of Barnes' $G$ function.
\end{abstract}
\end{center}

\newpage
\setcounter{tocdepth}{2}
\tableofcontents

\baselineskip=15pt
\setcounter{page}{1}
\setcounter{equation}{0}
\setcounter{footnote}{0}
\setlength{\parskip}{0.2cm}

\section{Introduction}

In the context of supersymmetric field theories and string vacua with 8 supercharges,
it is well-known that instanton corrections to the metric on certain quaternionic\footnote{Since  \hk (HK) and  \qk (QK)
spaces  are parametrized by solutions of non-linear differential equations often known as heavenly equations,
we refer to such quaternionic metrics as `heavenly', irrespective of their precise nature.
See \cite{Plebanski:1975wn} and \cite{Przanowski:1984qq} for the original descriptions in
the four-dimensional HK and QK cases, and \cite{Adamo:2021bej} for a recent discussion of the higher dimensional case.}
moduli spaces are determined from the BPS indices $\Omega(\gamma,z)$ which count BPS states with
electromagnetic charge $\gamma$ in $3+1$ dimensions. This includes the  metric on the vector-multiplet
moduli space $\cM$ of 4D gauge theories compactified on a circle of radius $R$. The latter is a torus
bundle over the Coulomb branch $\cS$ in $3+1$ dimensions, equipped with a hyperk\"ahler metric which reduces to
the semi-flat metric in the limit $R\to \infty$, but is deformed by $\cO(e^{-R})$
instanton corrections induced by BPS dyons whose worldline winds around the circle \cite{Seiberg:1996nz}.
The full quantum corrected metric is  determined in terms of the BPS indices $\Omega(\gamma,z)$ via
a twistorial construction  due to \cite{Gaiotto:2008cd} (GMN). A key fact is that the HK metric on $\cM$ is smooth
across walls of marginal stability, even though the BPS indices $\Omega(\gamma,z)$ are discontinuous.

Similarly, for type II strings compactified on a $\cX\times S^1$ where $\cX$ is a Calabi-Yau (CY) threefold,
instanton corrections to the metric on the vector-multiplet moduli space $\cM$ are governed
(up to subleading $\cO(e^{-R^2})$ corrections due to Taub-NUT instantons)
by the generalized Donaldson-Thomas (DT) invariants $\Omega(\gamma,z)$  of the relevant triangulated category
of branes on  $\cX$ (the category of coherent sheaves
on $\cX$ in type IIA, or  the Fukaya category of Lagrangian submanifolds in type IIB), which count
BPS black holes in $3+1$ dimensions \cite{Douglas:2000gi}.
The space $\cM$ is a twisted torus bundle over $\IR^+_R\times \cS$, where $\cS$ is now the
moduli space of complexified K\"ahler or complex structures of $\cX$, respectively.
At order $\cO(e^{-R})$, the instanton-corrected \qk metric on $\cM$
is determined via an analogue of the GMN construction for QK spaces \cite{Alexandrov:2008gh,Alexandrov:2009zh}
(see \cite{Alexandrov:2011ac} for a precise correspondence between the two twistorial constructions,
and \cite{Alexandrov:2011va,Alexandrov:2013yva} for reviews and further references).
By T-duality, the same BPS indices also determine instanton corrections to
the hypermultiplet moduli space in type II strings compactified on
$\cX$ down to 4 dimensions, or in M-theory compactified on $\cX$ down to 5 dimensions.

Recently, Bridgeland proposed to associate to any triangulated category $\cD$
of CY3 type a complex \hk cone metric on the total space $\cM$ of the holomorphic tangent bundle to the space
of stability conditions $\cS={\rm Stab}(\cD)$ \cite{Bridgeland:2019fbi,Bridgeland:2020zjh}.
This metric is determined by the DT invariants $\Omega(\gamma,z)$, where $z$
parametrizes the space of stability conditions,
and is a close analogue of the metric obtained from the GMN construction in the so-called conformal limit \cite{Gaiotto:2014bza}.
However, a key difference is that the metric (being complex) has split signature, both along the base and along the fiber.
Moreover, the base $\cS$ of the fibration is the full space of stability conditions
${\rm Stab}(\cD)$  in the sense of \cite{MR2373143}, rather than the complex Lagrangian submanifold
parametrizing the moduli space of complexified K\"ahler or complex structures of a CY threefold $\cX$,
in the case where $\cD$ is one of the afore-mentioned categories of branes.
For these reasons, the physical significance of Bridgeland's construction remains
unclear.\footnote{In the context of class $S$ field theories, the full space of stability conditions
${\rm Stab}(\cD)$ \cite{bridgeland2015quadratic} can be interpreted as the total space of the Coulomb branch fibration over
the conformal manifold, but the significance of the complex HK metric on $\cM$ is yet to
be understood.} Nevertheless, given its mathematical relevance and close
connections with the string and gauge theory constructions,
we find it worthwhile to examine it with the same tools as previous,
physically motivated heavenly moduli spaces. As  we shall see,
its basic ingredients turn out to have direct physical counterparts which have been defined on those moduli spaces.

As explained in \cite{Bridgeland:2020zjh,Dunajski:2020qhh}, the main step in constructing
the complex \hk metric on $\cM=T\cS$ is to produce a solution $W(z,\theta)$
of the partial differential equations\footnote{The factor $1/(2\pi)^2$ appearing in this and a few other equations is due to
our normalization of the coordinates $z^a$ and $\theta^a$ which coincides with normalization used in \cite{Bridgeland:2017vbr},
but differs by the factor $2\pi\I$ from the one in \cite{Bridgeland:2019fbi,Bridgeland:2020zjh}.}
\be
\frac{\p^2 W}{\p z^a\p\theta^b}-\frac{\p^2 W}{\p z^b\p \theta^a}
=\frac{1}{(2\pi)^2} \sum_{c,d=1}^{\dm}
\omega^{cd}\, \frac{\p^2 W}{\p \theta^a\p\theta^c}\, \frac{\p^2 W}{\p \theta^b\p\theta^d}\, ,
\qquad \forall a,b=1,\dots, \dm,
\label{heaveq}
\ee
where $z^a$ are complex coordinates on $\cS$, $\theta^a$ are
the corresponding complex coordinates on the fiber, and $\omega^{ab}=\langle\gamma^a,\gamma^b\rangle$
is a constant, invertible antisymmetric matrix determined by the skew-symmetric pairing
on the charge lattice $\Gamma=\IZ \gamma^1+\dots + \IZ\gamma^m$ (which implies that $m$ is even).
For $m=2$, \eqref{heaveq} is known as the `second heavenly equation' governing generic
HK manifolds in real dimension 4 \cite{Plebanski:1975wn}.
%in a coordinate system with reality conditions mixing $z^1,z^2$ with $\theta^1,\theta^2$.
%In higher dimensions, the HK metrics constructed
%from solutions of \eqref{heaveq} are distinguished by the existence of a covariantly constant
%nilpotent endomorphism of the tangent bundle \cite{Dunajski:2020qhh}.
The solution $W(z,\theta)$
constructed by Bridgeland is  homogeneous of degree $-1$ with respect to the coordinates $z^a$,
leading to a cone metric on  $\cM=T\cS$.
As shown in \cite{Bridgeland:2020zjh}, its derivative
\be
F=\sum_{a=1}^{\dm} z^a\p_{\theta^a} W\ ,
\label{FW}
\ee
which is homogeneous of degree $0$ with respect to the coordinates $z^a$,
turns out to be identical to the holomorphic function postulated by Joyce in his prescient work \cite{joyce2007holomorphic}.
We shall refer to $W$ and $F$ as the Pleba\'nski and Joyce potentials, respectively.\footnote{In
\cite{Bridgeland:2019fbi,Bridgeland:2020zjh} $W$ has been called the Joyce function, but we prefer to reserve this attribution
for $F$.}

The equation \eqref{heaveq}
arises as the integrability condition for the vector fields on the twistor space $\cZ=
\IP^1_t\times \cM$,
\be
u_a = \frac{\p}{\p z^a}+\frac{1}{t}\, \frac{\p}{\p\theta^a}
-\frac{1}{(2\pi)^2} \sum_{b,c=1}^{\dm}\omega^{bc}\frac{\p^2 W}{\p \theta^a\p\theta^b}\,\frac{\p}{\p\theta^c}
\label{actonX}
\ee
whose flat sections are the holomorphic Fourier modes\footnote{Here $\sigma_{\gamma}$
is a quadratic refinement of the skew symmetric pairing on the
charge lattice $\Gamma$, a technical device which allows to trivialize the twisted group law
$\cX_\gamma\,\cX_{\gamma'} =(-1)^{\langle\gamma,\gamma'\rangle} \cX_{\gamma+\gamma'}$.}
$\cX_\gamma(t)=\sigma_{\gamma} e^{2\pi\I \langle\gamma, \xi(t)\rangle}$,
where the complex Darboux coordinates $\xi^a(t)=\langle\gamma^a,\xi(t)\rangle$ parametrize the twistor lines on  $\cZ$.

In this work, we shall provide general formulae for the Pleba\'nski and Joyce potentials,
and relate them to quantities which were introduced previously in the physical context of quaternionic moduli spaces.
Our formulae are expressed in terms of solutions to
the TBA-like system of integral equations for the holomorphic Darboux coordinates  $\cX_\gamma$
on $\cZ$,
\be
\cX_\gamma(t)=
\sigma_{\gamma} \,e^{2\pi\I(\theta_{\gamma}-Z_\gamma/t)}
\exp\[\frac{1}{2\pi\I}\sum_{\gamma'\in\Gamma}
\Omega(\gamma')\langle\gamma,\gamma'\rangle\int_{\ell_{\gamma'}}\frac{\de t'}{t'}\,
\frac{t}{t-t'}\,\log\(1-\cX_{\gamma'}(t')\)\]
\label{TBAeq}
\ee
where $\ell_\gamma$ are the BPS rays on $\IP^1_t$.
This system, first proposed in \cite{Gaiotto:2014bza,Filippini:2014sza}, arises  in the conformal
limit of the integral equations in \cite{Gaiotto:2008cd}. It is
equivalent to the Riemann-Hilbert problem introduced in \cite{Bridgeland:2016nqw}
provided the sum over $\gamma'\in\Gamma$ converges. Throughout this work, we always assume
that such sums over charge vectors are  convergent (possibly not in absolute sense, but for
an appropriate prescription). This is the case, for example, when
only a finite number of indices $\Omega(\gamma)$ are non-vanishing. Examples where
an infinite number of indices contribute will be discussed in our subsequent paper \cite{Alexandrov:2021prq}.

Our main result shown in \S\ref{sec_int} is that the integral
\be
\begin{split}
W=&\, \frac{1}{2\pi\I} \sum_{\gamma\in\Gamma} \Omega(\gamma)
\int_{\ell_{\gamma}}\frac{\de t}{t^2}\,\Li_2\(\cX_{\gamma}(t)\)
\\
&\,
+\frac{1}{2(2\pi\I)^2} \sum_{\gamma, \gamma'\in \Gamma}
\Omega(\gamma)\, \Omega(\gamma')
\int_{\ell_{\gamma}}\frac{\de t}{t}\
\int_{\ell_{\gamma'}}\frac{\de t'}{t'}\,
\frac{\langle\gamma,\gamma'\rangle}{t-t'}\,\log\(1-\cX_{\gamma}(t)\)\log\(1- \cX_{\gamma'}(t')\)
\end{split}
\label{Wfull}
\ee
provides a solution to the heavenly equation \eqref{heaveq}, so that the holomorphic Fourier modes  $\cX_\gamma$
are annihilated by the vector fields \eqref{actonX}. Moreover, we also show that the Joyce potential
is given by an even simpler formula,
\be
\label{Ffull}
F=-\sum_{\gamma\in\Gamma}\Omega(\gamma)\, Z_{\gamma}
\int_{\ell_{\gamma}}\frac{\de t}{t^2}\,\log\(1-\cX_{\gamma}(t)\) .
\ee
By construction, $F$ and $W$ are  smooth across walls of marginal stability.

Remarkably, functions similar to $F$ and $W$ have already appeared previously.
In particular, comparing \eqref{Ffull} with the contact potential introduced in \cite{Alexandrov:2008nk} and computed in
the case relevant to our setup in \cite[Eq.(3.41)]{Alexandrov:2009zh}, we see
that the Joyce potential \eqref{Ffull} is essentially the conformal limit of the (non-perturbative part of the) contact potential.
In the context of string theory vacua, the latter is identified with the four-dimensional dilaton,
whereas in the gauge theory context, it was argued to be identical with
the Witten index on $\IR^3$ \cite{Alexandrov:2014wca}.
Mathematically, the contact potential is closely related to the K\"ahler potential on the twistor space,
or equivalently to the \hk potential on the Swann bundle  \cite{Alexandrov:2008nk}.
Similarly, the Pleba\'nski potential \eqref{Wfull} can be viewed as the conformal
limit of the instanton generating potential introduced in the study of D3-instanton corrections \cite[Eq.(3.22)]{Alexandrov:2018lgp},
where it played a crucial role in determining the modular properties of generating functions of DT invariants.
Furthermore, both potentials have an interpretation in the context of integrable systems described by TBA equations:
comparing with \cite[Eqs.(3.5),(3.9)]{Alexandrov:2010pp} it is easy to see that $F$ coincides with
the conformal limit of the free energy, while $W$ is the conformal limit of the critical value of the
Yang-Yang functional.

Upon solving the TBA equation \eqref{TBAeq} iteratively and plugging into \eqref{Wfull} and \eqref{Ffull},
we obtain expansions for $F$ and $W$ in powers of  the rational DT invariants  $\bOm(\gamma_i,z)$.
The coefficients in these expansions involve certain iterated integrals which are reminiscent of the Goncharov polylogarithms
found in a similar context in \cite{bridgeland2013stokes}, but have an arguably simpler form.
Subject to a purely combinatorial conjecture which has passed many tests,
we show that these coefficients satisfy a set of conditions determined by Joyce \cite{joyce2007holomorphic},
and given the uniqueness of the solution to these conditions, we conclude that they
coincide with the ones found in \cite{bridgeland2013stokes}.
It should be pointed out that these expansions are formal since, unlike in the non-conformal case \cite{Gaiotto:2008cd},
there is no exponential damping factor at large charge.

We also analyze the $\tau$ function introduced in \cite{Bridgeland:2016nqw,Bridgeland:2017vbr}
as a kind of generating function for Darboux coordinates in the special case of uncoupled BPS structures,
such that $\langle \gamma,\gamma'\rangle=0$ whenever $\Omega(\gamma,z)$ and $\Omega(\gamma',z)$ are both non-zero.
In this case, we can integrate the defining differential equation explicitly and obtain an integral representation
for $\log\tau$ which extends the previous results to non-vanishing $\theta^a$.
If the BPS structure is more general, the integrability condition for the differential equation
is not satisfied and thus the $\tau$ function does not exist, although we speculate
that this could be cured by appropriately modifying the defining equation.

In the case of uncoupled BPS structures, wall-crossing phenomena are absent and the conformal
TBA equations can be solved explicitly  in terms of the Euler Gamma function \cite{Gaiotto:2014bza,Barbieri:2018swu}.
As a result, both $F$ and $W$ evaluate to sums over charges $\gamma$ such that $\Omega(\gamma,z) \neq 0$.
In the case where $\Omega(\gamma)$ is supported on a finite number of charge vectors,
this allows us to recover the results of \cite{Bridgeland:2019fbi}.
The $\theta^a$-dependent $\tau$ function is captured in this case by a two-parameter
generalization $\Upsilon(z,\eta)$ \eqref{defUpsilon} of the Barnes function, which reduces
to the function
introduced in \cite{Bridgeland:2016nqw} when $\eta=0$.
As a step in deriving the asymptotic expansion of this function at large $z$, we establish a new integral
formula \eqref{BinetG1gen} for the Barnes function which is analogous to Hermite's generalization
of the first Binet formula for the Gamma function.

If the BPS spectrum is infinite, the situation is much more complicated because
the sum over charges in the formula \eqref{taufun-res2} for the $\tau$ function is {\it not} absolutely convergent
and  therefore ambiguous. We address this case and compare with the results
of \cite{Bridgeland:2017vbr,Bridgeland:2019fbi} for the resolved conifold in our subsequent
work \cite{Alexandrov:2021prq}.

Finally, based on our previous work \cite{Alexandrov:2019rth},
we provide a general solution for a deformation of \eqref{heaveq}, where the Poisson bracket on the r.h.s.
is replaced by a Moyal commutator, in terms of solutions of a suitably deformed TBA-like system of equations
governed by refined DT invariants $\Omega(\gamma,y,z)$. The smoothness of the solution across walls of
marginal stability is again ensured by construction.
On the other hand, the smoothness in the unrefined limit $y\to 1$ is not manifest.
Nevertheless, relying on a technical conjecture which has been extensively checked on a computer,
we show that our solution does reduce to \eqref{Wfull} in this limit.
The mathematical and physical significance of these deformations are not yet fully appreciated,
and it would be interesting to relate our results to other approaches such as \cite{Cecotti:2014wea,Barbieri:2019yya}.

The outline of this article is as follows.
In the next section we review the notion of BPS structure, the twistorial formulation of the associated HK geometry,
and the definitions and some properties of the Joyce and Pleba\'nski potentials.
In \S\ref{sec_joyce} we prove the integral representations for these two potentials presented above,
derive their formal expansions in powers of DT invariants, and analyze the $\tau$ function.
In \S\ref{sec-local} we make the previous results explicit in the uncoupled case,
and in \S\ref{sec_def} we provide a refined version of our construction.
Several appendices contain proofs and details omitted in the main text.
In particular, Appendix \ref{ap-proof} collects most of our proofs.
In Appendix \ref{ap-fun} we present definitions and properties of the special functions relevant
in the uncoupled case, whereas in Appendix \ref{ap-finite} we provide details on the derivation
of the corresponding explicit expressions for $\cX_\gamma$ and the $\tau$ function.

\section{Complex HK metrics and Joyce structures from twistors}
\label{sec-HKmetric}

In this section, we review the construction of complex HK metrics from variations of BPS structures,
as well as the definition of the Joyce and Pleba\'nski potentials, and phrase them in the language of twistorial geometry.

\subsection{BPS structure}
Following \cite{Bridgeland:2016nqw} with minor deviations, we define
a BPS structure as a triplet $(\Gamma,Z,\Omega)$
consisting of a lattice $\Gamma\sim \IZ^{\oplus n}$
equipped with a skew-symmetric integer pairing $\langle \cdot,\cdot\rangle$, a linear map $Z:\Gamma\to \IC$
and a map $\Omega:\Gamma\backslash\{0\} \to \IZ[y,1/y]$ satisfying the following properties:
\begin{itemize}
\item[i)] Symmetry: $\Omega(-\gamma,y)=\Omega(\gamma,y)$ for all
$\gamma\in\Gamma$,

\item[ii)] Support property: $\Omega(\gamma,y)\neq 0 \Rightarrow |Z_\gamma| >  ||\gamma||$
where $||\cdot||$ is a fixed norm on $\Gamma\otimes\IR$,

\item[iii)] Convergence: for any $y\in\IC^\times$, there exists $R>0$ such that
$\sum_{\gamma\in\Gamma} |\Omega(\gamma,y)|\, e^{-R Z_{\gamma}}<\infty$,
\end{itemize}
where $Z_\gamma$ denotes the image of $\gamma$ under the map $Z$.
The BPS structure $(\Gamma,Z,\Omega)$ is uncoupled if it satisfies, in addition to i)--iii),
\begin{itemize}
\item[iv)]  $\langle \gamma,\gamma'\rangle\neq 0 \ \Rightarrow \ \Omega(\gamma,y)=0 \ \mbox{or} \
\Omega(\gamma',y)=0$
\end{itemize}

We consider families of  (convergent, refined) BPS structures, where $Z$ varies in an open set  $\cU$ inside
the space of stability conditions $\cS={\rm Stab}(\cD)$ for a CY3-category $\cD$ in the sense of
\cite{MR2373143}. To avoid cluttering, we omit the dependence of $\Omega(\gamma,y)$ on $Z$.
Across the locus of marginal stability, defined as the union of the
walls
\be
\cW_{\gamma_1,\gamma_2} = \{ Z\in \cS \ \mbox{such that}\  \Im(Z_{\gamma_1} \bZ_{\gamma_2})=0,
\  \Re(Z_{\gamma_1} \bZ_{\gamma_2})>0
\}
\ee
over all pairs of charges $\gamma_1,\gamma_2\in \Gamma$ such that $\langle \gamma_1,\gamma_2\rangle\neq 0$,
we require that the map $\Omega$ changes according to the Kontsevich-Soibelman
wall-crossing formula \cite{ks} (this condition is void in the uncoupled case).
We denote $\Gamma_{\!\star}=\Gamma\backslash\{0\} $ and
define $\Gamma_{\! +}$ as the set of vectors $\gamma\in\Gamma_{\!\star} $ such that
$\Im Z_\gamma>0$ or $Z_\gamma\in \IR^-$, and for simplicity we assume that $\Gamma_{\! +}$ stays constant throughout the open set $\cU$.
We define the rational DT
invariants $\bOm(\gamma,y)$ through the relation
\be
\bOm(\gamma,y) = \sum_{d|\gamma} \frac{y-1/y}{d(y^d-y^{-d})}\,\Omega(\gamma/d,y^d),
\label{bOm}
\ee
and denote by  $\Omega(\gamma)$ the value of
$\Omega(\gamma,y)$ at $y=1$.

In the physical context of four-dimensional field or string theories with $\cN=2$ supersymmetry,
$\Gamma$ is the lattice of electromagnetic charges, $\langle \cdot,\cdot\rangle$ is
the Dirac-Schwinger-Zwanziger pairing, $Z_\gamma$ is the central charge
and $\Omega(\gamma,y)$ are the refined BPS indices, or generalized Donaldson-Thomas invariants.
The condition i) is a consequence of CPT symmetry. The condition ii) is a technical condition
needed to ensure that the density of the set $\{ Z_\gamma\; :\; \gamma\in \Gamma,\; \Omega(\gamma)\neq 0\}$
inside $\IC$ does not grow exponentially at large $|Z|$ \cite{ks}.
The convergence property iii) is expected to hold in any non-gravitational supersymmetric field theory,
and in superstring theory or M-theory compactified on a CY threefold $\CY$ when it is non-compact,
but not in genuine string compactifications.
In the context of string vacua, the central charge $Z$ is restricted to a complex Lagrangian subspace of the
space ${\rm Stab}(\cD)$ of the relevant derived category of branes on $\CY$, determined locally by a prepotential.
The condition iv), also known as mutual locality, is satisfied
when there exists an electromagnetic
frame where all BPS dyons carry only electric charge. This is, in particular, the case
for type IIA strings compactified on a non-compact CY threefold without compact divisors.

\subsection{Twistor space}

Let us choose a basis of vectors $\gamma^a\in\Gamma_{\! +}$, $a=1,\dots, \dm$.
We denote by $q_a$ the coefficients of $\gamma$ in this basis,
i.e. $\gamma=\sum_{a=1}^{\dm}q_a\gamma^a$.
Then let $\cM$ be the holomorphic tangent bundle over $\cS$, parametrized by $(z^a,\theta^a)$ such that
the base is parametrized by $z^a=Z_{\gamma^a}$ and the fiber is parametrized by $\theta^a \p_{z^a}$.
In order to specify a complex HK metric on $\cM$, we need to specify a  holomorphic two-form
$\omega$ on the twistor space on $\cZ=\cM\times \IP^1_t$, viewed as a holomorphic fibration of $\cM$ over $\IP^1_t$,
and a family of sections (known as twistor lines) parametrized
by $\cM$ \cite{Hitchin:1986ea}.
More explicitly \cite{Alexandrov:2008ds},
we need to construct local  coordinates $\xi^a(z,\theta,t)$ on $\cZ$
such that  the  twistor lines are parametrized by $\xi^a(t)$ for $t\in \IC \cup \{\infty\}$ and fixed $(z,\theta)$,
and the holomorphic two-form takes the Darboux form
\be
\omega=\frac{t}{2}\sum_{a,b} \omega_{ab}\,  \de\xi^a \,\d\xi^b,
\ee
where $\omega_{ab}$ is the inverse of the constant (integer-valued)
skew-symmetric matrix $\omega^{ab}=\langle\gamma^a,\gamma^b\rangle$.
The HK two-forms $\omega_\pm,\omega_3$
on $\cM$ can then be read off
by pulling back $\omega$ from $\cZ$ to $\cM$ and identifying
\be
\omega= t^{-1} \omega_+ -\I  \omega_3 + t \omega_-\, .
\label{symform}
\ee

If one requires that the HK metric on $\cM$ descends to the torus bundle $\cM/\Gamma$, where
$\Gamma$ shifts the coordinates $\theta^a$, then $\Gamma$ should act
on $\cZ$ by integer shifts of the Darboux coordinates $\xi^a$, such that the holomorphic Fourier modes
$\cX_\gamma=\sigma_\gamma e^{2\pi\I \langle \gamma,\xi\rangle}$
descend to the quotient $\cZ/\Gamma$.
Here we included into $\cX_\gamma$ the factor known as quadratic refinement which is defined
as a map $\sigma:\Gamma\to\IC^\times$ such that
$\sigma_\gamma\, \sigma_{\gamma'} = (-1)^{\langle\gamma,\gamma'\rangle} \sigma_{\gamma+\gamma'}$.
Therefore, specifying $\xi^a$ is equivalent to specifying twisted
holomorphic Fourier modes $\cX_\gamma:\cZ\to \IC$ such that
\be
\cX_\gamma\cX_{\gamma'}=(-1)^{\langle\gamma,\gamma'\rangle}\cX_{\gamma+\gamma'}\ .
\ee

The holomorphic Fourier modes $\cX_\gamma$ are determined
as solutions of the following Riemann-Hilbert problem \cite{Bridgeland:2016nqw}.
In the limit $t\to 0$, we require that they reduce,
up to exponential corrections, to\footnote{The index `sf' stands for `semi-flat', since
the HK metric encoded by the twistor lines \eqref{cXsf} is flat along the fibers parametrized by $\theta^a$.
In fact, in contrast to the HK metrics appearing in the physical context {\it before} taking the conformal limit,
it is also flat along the base $\cS$ so that the full complex HK space encoded by \eqref{cXsf}
is simply $\IC^{2m}$ endowed with a flat metric of signature $(2m,2m)$.}
\be
\cXsf_\gamma=\sigma_\gamma\, e^{2\pi\I(\theta_{\gamma} - Z_\gamma/t)},
\qquad
\theta_\gamma=\langle \gamma,\theta\rangle.
\label{cXsf}
\ee
Away from $t=0$ and for any $\gamma'\in\Gamma_{\!\star}$ such that $\Omega(\gamma')\neq 0$, we require that
the Fourier modes $\cX_\gamma^\pm$ defined on the two sides of the BPS ray
(on clockwise and anticlockwise sides, respectively)\footnote{Our convention is opposite to the one in
\cite{Bridgeland:2016nqw,Bridgeland:2017vbr,Bridgeland:2019fbi},
which can be reached by flipping the sign of the skew-symmetric
pairing $\langle \gamma,\gamma'\rangle$. \label{foot-sign}}
\be
\ell_{\gamma'}= \{ t\in \IP^1:\quad Z_{\gamma'}/t\in \I\IR^-\} ,
\ee
are related by a KS symplectomorphism \cite{ks}
\be
\cX_\gamma^+ = \cX_\gamma^- (1-\cX_{\gamma'}^- )^{\Omega(\gamma')\,\langle\gamma,\gamma'\rangle}.
\ee
Finally, for large $|t|$ we require that $\cX_\gamma(t)$ behaves polynomially, i.e.
\be
|t|^{-k}<|\cX_\gamma(t)|<|t|^k
\ee
for some $k>0$.
This Riemann-Hilbert problem was  studied in detail in \cite{Bridgeland:2016nqw,Bridgeland:2017vbr}
for the case of uncoupled BPS structure, where explicit solutions can be written in terms of
some special functions which will be reviewed in section \ref{sec-local}.
When the BPS structure does not satisfy the above condition, i.e. it is not mutually local,
no explicit solutions are known at present.

As usual for such Riemann-Hilbert problems, it is convenient to reformulate
the previous boundary condition and discontinuity requirements as the
TBA-like system of integral equations \eqref{TBAeq}. These equations take a simpler form
when expressed in terms of the rational invariants \eqref{bOm} \cite{Filippini:2014sza}
\be
\cX_\gamma(t)=
\cXsf_\gamma(t)\,
\exp\[\frac{1}{2\pi\I}\sum_{\gamma'\in\Gamma_{\!\star}} \bOm(\gamma')
\langle\gamma,\gamma'\rangle\int_{\ell_{\gamma'}}\frac{\de t'}{t'}\,
\frac{t}{t'-t}\,\cX_{\gamma'}(t')\].
\label{TBAeq1}
\ee
This formulation allows in particular to generate a formal expansion
of $\cX_\gamma$ in powers of rational DT invariants, by replacing $\cX_{\gamma'}$ in the integral by
$\cXsf_{\gamma'}$ and then iterating this procedure as in \cite{Gaiotto:2008cd}. It also
makes it manifest that the solutions satisfy the homogeneity property (known as conformal invariance in this context)
\be
\label{confprop}
\left[  t\p_t  + \sum_a z^a \p_{z^a}   \right] \cX_\gamma=0 .
\ee

In fact, the above Riemann-Hilbert problem as well as the integral equation \eqref{TBAeq1} formally
arise as the conformal limit \cite{Gaiotto:2014bza,Filippini:2014sza} of the GMN construction \cite{Gaiotto:2008cd}.
In the latter,  the boundary condition \eqref{cXsf} has a more complicated form
$\cXsf_\gamma(t)= \sigma_{\gamma}e^{2\pi\I(\theta_{\gamma} - Z_\gamma/t+ \bZ_{\gamma}t)}$
while the kernel in the integral equation involves the factor $\hf\,\frac{t'+t}{t'-t}=\frac{t}{t'-t}+\frac12$
rather than $\frac{t}{t'-t}$ in  \eqref{TBAeq1}.
The conformal limit corresponds to $t\to 0$ and $z^a\to 0$ keeping the ratio $z^a/t$ fixed.
Whereas $\cXsf_\gamma$  trivially reduces to \eqref{cXsf} in this limit, in order
to arrive at  \eqref{TBAeq1} one has to absorb the contribution of the constant term $\frac12$
in the kernel into a redefinition of the coordinates  $\theta^a$
{\em before} taking the limit (a change of frame in the language of \cite{Filippini:2014sza}).
In particular, the new coordinates $\theta^a$ are no longer real.

The relation to the GMN construction just described allows us to translate various structures which have been developed
since its discovery to the current context. As we will see, some of them turn out to coincide with the principal objects
determining the geometry of the complex HK space $\cM$.

\subsection{Joyce connection and potential}
\label{sec_Joyce}

Given a solution to the system \eqref{TBAeq1}, we can construct a connection $\nabla$ on $\IP^1_t$
such that the holomorphic Fourier modes $\cX_\gamma$ are flat sections of this connection. It is shown in \cite{Filippini:2014sza}
that the resulting connection is analytic across the BPS rays, and has a double pole at $t=0$,
with coefficient given by the central charge function. Hence, it takes the form
\be
\nabla=\de-\(\frac{\hat Z}{t^2}+\frac{\hat f}{t}\)\de t,
\label{nabla}
\ee
where $\hat Z=\sum_a z^a \p_{\theta^a}$ and
$\hat f= \frac{1}{(2\pi)^2}\sum_{a,b} \omega^{ab} \p_{\theta^a} F \p_{\theta^b}$ is an Hamiltonian vector field
generated by a function $F$ on $\cM$, which we call the Joyce potential.
Put differently, the holomorphic Fourier modes satisfy
\be
\label{flatsec}
\left[ t \p_t - \frac{1}{t} \sum_a z^a \p_{\theta^a}
+\frac{1}{(2\pi)^2} \sum_{a,b} \omega^{ab} \p_{\theta^a} F \,\p_{\theta^b}
\right] \cX_\gamma =0,
\qquad
\forall\gamma\in\Gamma.
\ee
Note that the homogeneity property \eqref{confprop} requires $F$ to be homogeneous with respect to the coordinates $z^a$.

By construction, the Stokes data of this meromorphic connection is determined by the BPS indices,
which are locally independent of the coordinates $z^a$. It follows that the Fourier modes of the Joyce potential,
defined by\footnote{The zero mode of $F$ does not contribute to the vector field  $\hat f$ and may be set to 0.}
\be
F=\sum_{\gamma\in\Gamma_{\!\star} } F_\gamma\, e^{2\pi\I \theta_\gamma}
\label{defF}
\ee
satisfy the isomonodromy equation
\be
\de F_\gamma=\sum_{\gamma_1,\gamma_2\in \Gamma_{\!\star}
\atop \gamma_1+\gamma_2=\gamma}
\langle\gamma_1,\gamma_2\rangle\, F_{\gamma_1}F_{\gamma_2}\, \de\log Z_{\gamma_2}\, .
\label{isoeq}
\ee
In  \cite{joyce2007holomorphic}, Joyce has constructed a solution of this equation,
as  a formal series in powers of rational BPS indices,\footnote{The quadratic refinement
$\sigma_{\gamma_i}$ was not included in \cite{joyce2007holomorphic}, which relied on a slightly
different version of rational DT invariants, which lacks the property of being deformation invariant, see e.g. \cite{bridgeland2012introduction}  for details.}
\be
\label{FJoyce}
F_\gamma= \sum_{n=1}^\infty\sum_{\gamma=\sum_{i=1}^n \gamma_i} \, \frac{1}{2^{n-1}}
\left[  \sum_{\cT\in \mathbb{T}_n^\ell}\,
 \prod\limits_{e: i\to j}  \langle \gamma_i, \gamma_j \rangle \right]
\, J_n(Z_{\gamma_1},\dots, Z_{\gamma_n})\, \prod_{i=1}^n\, \sigma_{\gamma_i}\, \bOm(\gamma_i),
\ee
where the sum runs over all  decompositions of $\gamma$ into ordered sums of vectors $\gamma_i\in\Gamma_{\!\star}$,
 $J_n(z_1,\dots, z_n)$ are multi-valued, homogeneous
 functions on $(\IC^\times)^n$ and $\mathbb{T}_n^\ell$ is the
set of all connected trees with $n$ vertices labeled from 1 to $n$, with edges oriented as $e:i\to j$ whenever $i<j$
(The number of such trees, which we call unrooted labeled trees for brevity, is $n^{n-2}$).
The formal series \eqref{FJoyce} is smooth across walls of marginal stability and satisfies \eqref{isoeq}
provided the BPS indices jump according to the KS formula
and the functions $J_n(z_1,\dots, z_n)$ satisfy the following axioms:
\begin{subequations}
\begin{itemize}
\item[a)] $J_n(z_1,\dots, z_n)$ is a homogeneous function of its arguments,
holomorphic (and continuous) on the domain
\be
\label{defDn}
\cD_n = \{ (z_1,\dots, z_n) \in (\IC^{\times})^n\ :\quad z_{k+1}/z_k\notin [0,\infty] \ \mbox{for all}\ 1\leq k<n\}.
\ee
\item[b)] It satisfies  the following recursion relations, with $J_1(z)=1/(2\pi\I)$:
\be
\label{dFJoyce}
\de J_n = \sum_{k=1}^{n-1}\, J_k(z_1,\dots z_k)\, J_{n-k}(z_{k+1},\dots, z_n)\,
\left[ \frac{\de z_{k+1}+\dots+ \de z_n}{z_{k+1}+\dots+z_n} -
\frac{\de z_{1}+\dots+ \de z_k}{z_{1}+\dots+z_k}\right].
\ee
\item[c)] For all $(z_1,\dots, z_n)\in (\IC^\times)^n$,
\be
\Sym J_n(z_1,\dots z_n) = 0,
\label{vanishsym}
\ee
where $\Sym$ denotes symmetrization over all permutations of indices of arguments.
\item[d)]
It satisfies the growth condition as $z_k\to 0$ keeping $z_{\ell\neq k}$ fixed,
\be
\label{growth}
J_n(z_1,\dots z_n) = o(1/|z_k|).
\ee
\label{axiomsJ}
\end{itemize}
\end{subequations}
\vspace{-0.8cm}
As shown in \cite{joyce2007holomorphic}, these axioms specify the functions $J_n$ uniquely. In particular,
\be
(2\pi\I)^2 \, J_2(z_1,z_2)=
 \begin{cases}
\log\left({z_2}/{z_1} \right)-\I\pi,\quad
& z_2/z_1 \notin [0,\infty],
\\
\log\left({z_2}/{z_1} \right),
& z_2/z_1 \in [0,\infty],
\end{cases}
\label{J2}
\ee
where $\log z$ is defined such that $\Im\log z\in [0,2\pi)$.

\subsection{Pleba\'nski potential and heavenly metrics}

The Pleba\'nski potential is obtained from the Joyce potential by dividing
each Fourier mode by the corresponding central charge,
\be
W=\frac{1}{2\pi\I}\sum_{\gamma\in \Gamma_{\!\star} }
\frac{F_\gamma}{Z_\gamma} \, e^{2\pi\I \theta_\gamma} \,.
\label{defW}
\ee
Conversely, $F=\sum_{a=1}^{\dm} z^a\p_{\theta^a} W$.
Since $F_\gamma$ are homogenous functions of the coordinates $z^a$, $W$ is a homogeneous function of degree $-1$.
It follows from \eqref{isoeq} that
$W$ satisfies the heavenly equation
\be
\frac{\p^2 W}{\p z^a\p \theta^b}-\frac{\p^2 W}{\p z^b\p \theta^a}
=\frac{1}{(2\pi)^2}\sum_{c,d=1}^{\dm}
\omega^{cd} \, \frac{\p^2 W}{\p \theta^a\p\theta^c}\, \frac{\p^2 W}{\p \theta^b\p\theta^d}\, .
\label{heaveq1}
\ee
As explained in the introduction, this is also the integrability condition for the vector fields
$u_a$ in \eqref{actonX}. In fact, the horizontal sections for the vector fields are the same
ones as for the connection \eqref{flatsec}, namely
\be
\left[ \frac{\p}{\p z^a}+\frac{1}{t}\, \frac{\p}{\p\theta^a}
-\frac{1}{(2\pi)^2}\sum_{b,c=1}^{\dm}\omega^{bc}\frac{\p^2 W}{\p \theta^a\p\theta^b}\,\frac{\p}{\p\theta^c}
\right] \cX_\gamma =0.
\label{actonX1}
\ee
Indeed, using \eqref{confprop} it is easy to check that \eqref{actonX1} implies \eqref{flatsec}.

As explained in \cite{Bridgeland:2020zjh,Dunajski:2020qhh}, a solution of \eqref{heaveq1}
defines a complex \hk metric on $\cM$, the total space of the holomorphic tangent bundle
over $\cS$,
\bea
\label{defg}
g
&=& \sum_{a,b=1}^{\dm} \omega_{ab}\(v^a\otimes h^b+h^b\otimes v^a\),
\eea
where $h^a$, $v^a$ are the horizontal and vertical one-forms,
\be
h^a=\de z^a,
\qquad
v^a=\de\theta^a-\frac{1}{(2\pi)^2}\sum_{c,d=1}^{\dm}\omega^{cd}\frac{\p^2 W}{\p \theta^a\p\theta^c}\,\de z^d.
\ee
The dual vector fields
\be
h_a=\frac{\p}{\p z^a}-\frac{1}{(2\pi)^2}\sum_{c,d=1}^{\dm}\omega^{cd}\frac{\p^2 W}{\p \theta^a\p\theta^c}\,\frac{\p}{\p\theta^d},
\qquad
v_a=\frac{\p}{\p\theta^a}
\ee
satisfy $(h_a,h^b)=(v_a,v^b)=\delta_a^b$, $(v_a,h^b)=(h_a,v^b)=0$
and correspond to the horizontal and vertical lift of the vector field $\p_{z^a}$ on $S$.
Importantly the vector fields $h_a$ commute among each other by virtue of \eqref{heaveq1}
(or rather, the $\theta^a$ derivative thereof), just like the $v_a$'s.
In contrast, the commutator $[v_a,h_b]$ is non zero, but the vector fields $u_a=h_a+t^{-1} v_a$
appearing in \eqref{actonX1} do commute for any $t$.

In the basis $(v_a,h_a)$,
the three complex structures on $\cM$ are defined by the block matrices
(adapting the result of \cite{Bridgeland:2020zjh,Dunajski:2020qhh} to our conventions)
\be
J_1=\(\begin{array}{cc}
0 & -\unit \\ \unit & 0
\end{array}\)\,
\qquad
J_2=\(\begin{array}{cc}
0 & -\I\unit \\ -\I\unit & 0
\end{array}\),
\qquad
J_3=\(\begin{array}{cc}
\I\unit & 0 \\ 0 & -\I\unit
\end{array}\).
\ee
The associated symplectic forms $\omega_i(X,Y)=g(X,J_i Y)$, or equivalently their complex combinations
$\omega_\pm=\hf\(\omega_1\mp\I\omega_2\)$, are given by
\be
\omega_3=-\I\sum_{a,b}\omega_{ab} v^a\wedge h^b,
\qquad
\omega_+=\hf\sum_{a,b}\omega_{ab} h^a\wedge h^b,
\qquad
\omega_-=\hf\sum_{a,b}\omega_{ab} v^a\wedge v^b ,
\ee
so that the holomorphic two-form \eqref{symform} can be written as
\be
\omega=\frac{1}{2t}\sum_{a,b}\omega_{ab} (h^a-t v^a)\wedge (h^b-tv^b).
\ee
This is consistent with the identification $\de\xi^a=v^a-h^a/t$.
Note that $\omega_+$ is the pull-back of the symplectic form $\hf\,\omega_{ab} \de z^a \de z^b$ on $S$,
hence vanishes on the vertical space of the fibration $\cM\to S$, while  $\omega_-$ is non-degenerate along the fiber.
Since $W$ is homogeneous of degree $-1$, the metric \eqref{defg} has a homothetic Killing vector $z^a \p_{z^a}$,
hence is a complex HK cone.

\section{Integral representations and formal expansions}
\label{sec_joyce}

In this section we establish the main results of this work, namely,
the integral expressions of the Pleba\'nski potential $W$ \eqref{Wfull}
and the Joyce potential \eqref{Ffull},
and compare the formal expansion of these functions in powers of rational DT invariants with the original
construction of Joyce \cite{joyce2007holomorphic}.
Finally, we discuss a $\tau$ function that can be defined in the uncoupled case.

\subsection{Integral representations\label{sec_int}}

Here, our goal is to show that the integral expressions  \eqref{Wfull}, \eqref{Ffull} for $W$ and $F$
satisfy the section condition \eqref{actonX1}, the heavenly equation \eqref{heaveq1}, and the isomonodromy equation \eqref{isoeq}.
In fact, it is sufficient to prove only the section condition \eqref{actonX1}, since the two other equations follow by construction.
Nonetheless, for completeness we also provide their proofs in Appendices \ref{ap-proofHE} and \ref{ap-proofISO}, respectively.

To avoid cluttering, it will be useful to introduce some shorthand notations,
\be
\label{short}
\cX_i=\cX_{\gamma_i}(t_i),
\qquad
K_{ij}=\frac{\langle\gamma_i,\gamma_j\rangle}{2\pi\I}\,\frac{t_i t_j}{t_j-t_i}\, ,
\qquad
\IS_i=\sum_{\gamma_i\in\Gamma_{\!\star} }
\bOm(\gamma_i)\int_{\ell_i}\frac{\de t_i}{t_i^2}\, .
\ee
Note that  the kernel $K_{ij}$
is symmetric under $i\leftrightarrow j$.
With these notations, the TBA equation \eqref{TBAeq1} becomes
\be
\cX_1=\cX_1^{\rm sf}
\exp\[\IS_2 K_{12} \cX_2 \],
\quad
\label{TBAeq-sh}
\ee
where we recall that $\cX_\gamma^{\rm sf} = \sigma_\gamma e^{2\pi\I(\theta_{\gamma}-Z_\gamma/t)}$,
whereas the integral representations \eqref{Wfull} and \eqref{Ffull}, after rewriting them in terms of rational invariants,
take the form
\bea
\label{Wfull-short}
W&=&\frac{1}{2\pi\I}\, \IS_1 \cX_1\( 1-\hf\, \IS_2 K_{12} \cX_2\),
\\
F&=&\IS_1 Z_{\gamma_1}\cX_1.
\label{Ffull-short}
\eea

As a first step, let us evaluate derivatives of the Pleba\'nski potential.
This can be done using the relation
\be
\p_{\theta^a} \cX_1=\cX_1
\(2\pi\I q_{1,a}  +\IS_2 K_{12} \p_{\theta^a} \cX_2 \),
\label{derX-sh}
\ee
which follows by differentiating the equation \eqref{TBAeq-sh}.
It allows to find
\bea
\p_{\theta^a} W&=&\frac{1}{2\pi\I}\, \IS_1 \p_{\theta^a} \cX_1 \( 1-\IS_2 K_{12} \cX_2\)
 \nn\\
&=&\frac{1}{2\pi\I}\[ \IS_1\cX_1\(2\pi\I q_{1,a} + \IS_2 K_{12} \p_{\theta^a} \cX_2\)
- \IS_1 \p_{\theta^a} \cX_1 \IS_2 K_{12} \cX_2\]
\nn\\
&=&\IS_1\cX_1 q_{1,a},
\label{pW}
\eea
In the same way, one obtains
\be
\p_{z^a} W=-\IS_1\cX_1\, \frac{q_{1,a}}{t_1}\, ,
\ee
and therefore
\be
\p_{\theta^a} \p_{\theta^b} W= \IS_1 \p_{\theta^a} \cX_1\, q_{1,b},
\qquad
\p_{z^a}\p_{\theta^b} W =-\IS_1 \p_{\theta^b} \cX_1 \, \frac{q_{1,a}}{t_1}\, .
\label{p2W}
\ee
In particular, combined with the relation \eqref{FW},
the result \eqref{pW} immediately leads to the representation \eqref{Ffull-short}
of the Joyce potential $F$.

Next, we substitute these results into \eqref{actonX1} and evaluate the derivatives of the Fourier mode appearing there.
Unfortunately, the only way we have found to recombine these derivatives is to represent them as
infinite series obtained by iterating the relation \eqref{derX-sh}.\footnote{Note that this iteration procedure
is distinct from the usual iteration of the TBA equations \eqref{TBAeq1} which leads to a formal
series in powers of $\cXsf_{\gamma}$. This second expansion is the subject of the next subsection.}
Assuming convergence of such a series, the first two terms in \eqref{actonX1} are then found to be
\bea
\(\p_{z^a}+t_0^{-1}\p_{\theta^a}\)\log\cX_0&=&2\pi\I\sum_{n=1}^\infty
\prod_{i=1}^n \( \IS_i K_{i-1,i} \cX_i \)q_{n,a} \, \frac{t_n-t_0}{t_0  t_n}\, .
\label{1term-all}
\eea
On the other hand, the last term gives
\bea
&&
-\frac{1}{(2\pi)^2}\sum_{b,c}\omega^{bc}\frac{\p^2 W}{\p \theta^a\p\theta^b}\,\frac{\p}{\p\theta^c}\log\cX_0
= \frac{1}{(2\pi\I)^2}\,\IS_1\p_{\theta^a} \cX_1 \sum_{b,c}\omega^{bc}q_{1,b}\, \p_{\theta^c} \log\cX_0
\nn\\
&=& -\sum_{n'=0}^\infty \prod_{i=1}^{n'} \( \IS_{i} K_{i-1,i} \cX_{i} \)
\IS_{n'+1}\cX_{n'+1} \langle \gamma_{n'}, \gamma_{n'+1} \rangle
\sum_{n''=1}^\infty \prod_{i=n'+2}^{n'+n''} \( \IS_{i} K_{i-1,i} \cX_{i} \)q_{n'+n'',a}
\nn\\
&=& -2\pi\I \sum_{n=1}^\infty \prod_{i=1}^{n} \( \IS_i K_{i-1,i} \cX_i \)q_{n,a}
\sum_{i=0}^{n-1}\frac{t_{i+1}-t_i}{t_it_{i+1}}\, .
\eea
where in the second line the first sum comes from the expansion of $ \p_{\theta^c} \log\cX_0$, whereas
the second one originates from $\p_{\theta^a} \cX_1$, and in the last line $n=n'+n''$.
Since $\sum_{i=0}^{n-1}\frac{t_{i+1}-t_i}{t_it_{i+1}}=\frac{t_n-t_0}{t_0 t_n}$,
this contribution reproduces \eqref{1term-all} with the opposite sign so that \eqref{actonX1}
is indeed satisfied. $\Box$

The heavenly and isomonodromy equations, \eqref{heaveq1} and \eqref{isoeq}, are proven using similar manipulations.
We relegate these proofs to Appendices \ref{ap-proofHE} and \ref{ap-proofISO}.

Importantly, as explained in the Introduction,
the integral representations \eqref{Wfull-short} and \eqref{Ffull-short} allow to identify
the Pleba\'nski potential $W$ with the conformal limit
of the instanton generating potential of \cite{Alexandrov:2018lgp}, and the Joyce potential $F$ with the conformal limit
of the contact potential \cite{Alexandrov:2008nk,Alexandrov:2009zh}, or equivalently the
Witten index on $\IR^3$ \cite{Alexandrov:2014wca},
thereby providing a possible physical interpretation.

\subsection{Formal expansions}
\label{sec_asym}

As explained in \cite[\S C]{Gaiotto:2008cd}, the integral equations of the type \eqref{TBAeq1}
can be viewed as a fixed point problem, and solved by starting with the `approximate'
solution $\cX_\gamma(t)\sim
\cXcl_\gamma(t)$ and then iterating. This results in a formal expansion in powers of
BPS indices given by
\cite{Filippini:2014sza}
\be
\label{Hexpand}
\cX_{\gamma_1}(t_1) = \Hcl_{\gamma_1}(t_1) \, \sum_{n=1}^{\infty} \left(
\prod_{i=2}^n  \sum_{\gamma_i\in\Gamma_{\!\star}} \bOm(\gamma_i)
\int_{\ell_{\gamma_i}} \frac{\de t_i}{t_i^2}\,  \Hcl_{\gamma_i}(t_i) \right) \sum_{\cT\in \IT_n^{\rm r}}\,
\frac{\prod_{e:i\to j}K_{ij}
}{|{\rm Aut}(\cT)|}\, ,
\ee
where $\IT_n^{\rm r}$ is the set of rooted trees with $n$ vertices, and each vertex
of the tree is decorated with a charge $\gamma_i\in\Gamma_{\!\star} $, with  $\gamma_1$ associated
to the root vertex. More explicitly, using the same short hand notation as in the previous subsection
we get at the first few orders
\bea
\label{Xexpandshort}
\cX_{1} &= &  \cXcl_{1} + \IS_2 K_{12} \cXcl_{1+2}
+  \IS_2 \IS_3 \left( \tfrac12\, K_{12} K_{13} + K_{12} K_{23} \right)
 \cXcl_{1+2+3}
 \\
& + &  \IS_2\IS_3\IS_4\left( \tfrac16\, K_{12} K_{13} K_{14}
+ \tfrac12\, K_{12} K_{23} K_{24} + K_{12} K_{13} K_{24} + K_{12} K_{23} K_{34} \right)
 \cXcl_{1+2+3+4}+ \dots
\nn
\eea

The asymptotic expansion of $F$ follows from \eqref{Ffull-short} by integrating over $t_1$,
and symmetrizing over the $\gamma_i$'s,
\be
F = \sum_{n=1}^{\infty} \prod_{i=1}^n \left(
\sum_{\gamma_i\in \Gamma_{\!\star}} \bOm(\gamma_i)
\int_{\ell_{\gamma_i}}\frac{\de t_i}{t_i^2}\, \cX^{\rm sf}_{\gamma_i}(t_i) \right)\ Z_{\sum_i \gamma_i}\,
\sum_{\cT\in\mathbb{T}_n} K_\cT(\{\gamma_i,t_i\}),
\label{funF}
\ee
where $\cT$ runs over unrooted trees with arbitrary (but fixed) labeling of vertices and\footnote{Alternatively,
one can sum over unrooted labeled tress,
in which case the coefficient $1/|{\rm Aut}(\cT)|$ in $K_\cT$ is replaced by $1/n!$.}
\be
\label{defKT}
K_\cT(\{\gamma_i,t_i\}) = \frac{1}{(2\pi\I)^{n-1}|{\rm Aut}(\cT)|}
\prod_{e:i\to j} \frac{\langle\gamma_{i},\gamma_{j}\rangle \,t_i t_j}{t_j-t_i}\, .
\ee
More explicitly, at the first few orders we get
\bea
\label{Fexpandshort}
F  &= & \IS_1 Z_1  \cXcl_{1} + \frac12 \IS_1  \IS_2 Z_{1+2}  K_{12} \cXcl_{1+2}
+ \frac12 \IS_1  \IS_2 \IS_3  Z_{1+2+3} K_{12} K_{13}
 \cXcl_{1+2+3}
\nn \\
& + &  \IS_1   \IS_2\IS_3\IS_4 Z_{1+2+3+4} \left( \tfrac16\, K_{12} K_{13} K_{14}
+ \tfrac12\, K_{12} K_{23} K_{34} \right)
 \cXcl_{1+2+3+4}+ \dots
\eea
Further using \eqref{defW}, we find that  $W$ is given by a similar sum over unrooted, unlabeled trees,
without the factor of $Z_{\gamma}$ in each term,
\be
W =
\frac{1}{2\pi\I} \sum_{n=1}^{\infty} \prod_{i=1}^n \left(
\sum_{\gamma_i\in \Gamma_{\!\star}} \bOm(\gamma_i)
\int_{\ell_{\gamma_i}}\frac{\de t_i}{t_i^2}\, \cX^{\rm sf}_{\gamma_i}(t_i) \right)
\sum_{\cT\in\mathbb{T}_n} K_\cT(\{\gamma_i,t_i\}).
\label{iterW}
\ee

Let us consider now the Fourier modes $F_\gamma$ of the Joyce potential defined as in \eqref{defF}.
Collecting the coefficient of $e^{2\pi\I\theta_\gamma}$ in \eqref{funF}, we find
\be
\label{Fours}
F_\gamma = Z_{\gamma}\sum_{n=1}^{\infty}
\sum_{\gamma=\sum_{i=1}^n \gamma_i} \sum_{\cT\in\mathbb{T}_n}
\prod_{i=1}^n  \(\sigma_{\gamma_i}\bOm(\gamma_i)\int_{\ell_{\gamma_i}}\frac{\de t_i}{t_i^2}\,e^{-2\pi\I z_i/t_i}\)
K_\cT(\{\gamma_i,t_i\}),
\ee
where the sum runs over all ordered decompositions of $\gamma$.
While this expansion looks broadly similar to \eqref{FJoyce},
it is also markedly different: in contrast to \eqref{Fours},
in \eqref{FJoyce} the dependence on all pairings $\langle \gamma_i, \gamma_j \rangle$ is factorized.
Nonetheless, we claim that the two expressions are in fact identical. The agreement follows from the following  conjecture:
\begin{conj}
\label{conj-KT}
For each unrooted, unlabeled tree $\cT$ one has
\be
\Sym K_\cT(\{\gamma_i,t_i\})=
\frac{1}{(4\pi\I)^{n-1}}\,\Sym \Biggl[\(\sum_{\rm labels }\prod_{e:i\to j} \langle \gamma_i, \gamma_j \rangle\) \prod_{i=1}^{n-1}
\frac{t_i t_{i+1}}{t_{i+1}-t_i}\Biggr],
\label{propint}
\ee
where on the r.h.s. the sum goes over all labelings $\ell:v\mapsto i\in[1,n]$
of the vertices of $\cT$, and the edges are oriented from $v$ to $v'$ whenever $\ell(v)<\ell(v')$.
\end{conj}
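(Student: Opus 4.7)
The plan is to prove Conjecture~\ref{conj-KT} by reducing it to a purely combinatorial identity among rational functions of the $t_i$'s, and attacking that identity by induction on the number of vertices $n$. The essential building block is the \emph{star-to-path} identity
\begin{equation*}
\frac{t_i t_j}{t_j-t_i}\cdot\frac{t_j t_k}{t_k-t_j}
= \frac{t_i t_j}{t_j-t_i}\cdot\frac{t_i t_k}{t_k-t_i} + \frac{t_i t_k}{t_k-t_i}\cdot\frac{t_j t_k}{t_k-t_j},
\end{equation*}
valid for distinct $i,j,k$ and verified by clearing denominators. It lets any two adjacent edges of a tree, sharing a middle vertex $j$, be re-expressed as two ``V''-configurations sharing $i$ or $k$ respectively. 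Iterated application therefore transforms any tree-structured product $\prod_{e:i\to j} t_it_j/(t_j-t_i)$ into a linear combination of chain-structured ones.

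First I would rewrite both sides of \eqref{propint} using the $\mathrm{Aut}(\cT)$-orbit decomposition of $S_n$ acting on labelings of $\cT$, which yields
\begin{equation*}
\Sym K_\cT = \frac{1}{n!\,(2\pi\I)^{n-1}}\sum_L \prod_{e:i\to j}\langle\gamma_i,\gamma_j\rangle\,\frac{t_i t_j}{t_j-t_i},
\end{equation*}
with $L$ running over labelings of $\cT$ modulo $\mathrm{Aut}(\cT)$, and an analogous rewriting on the right-hand side. The conjecture then reduces to the combinatorial identity
\begin{equation*}
\sum_L M_\cT^L\,\tilde K_\cT^L = \frac{1}{2^{n-1}}\sum_{\sigma\in S_n}\sum_L \bigl( M_\cT^L\,\tilde K_{\mathrm{chain}}\bigr)^{\!\sigma},
\end{equation*}
where $M_\cT^L=\prod_e\langle\gamma_i,\gamma_j\rangle$, $\tilde K_\cT^L=\prod_e t_it_j/(t_j-t_i)$ for the edges of $\cT$ with labeling $L$, and $\tilde K_{\mathrm{chain}}=\prod_{i=1}^{n-1} t_it_{i+1}/(t_{i+1}-t_i)$.

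The base cases $n=2,3$ can be checked directly; the first genuinely non-trivial test is $n=4$ with $\cT$ the star on four vertices, where the star-to-path identity must be applied twice to convert the star-structured $t$-products into the chain-structured ones once the $\gamma$-pairings are resummed. For the inductive step, I would choose a leaf $v$ of $\cT$ with neighbor $u$, split the sum over labelings of $\cT$ by the index $k=L(v)$ assigned to $v$, and apply the induction hypothesis to $\cT\setminus\{v\}$ so that the inner sum yields a chain of length $n-1$. The outer operations then attach the edge $(u,v)$, creating a branching at the position $L(u)$ inside this chain, which one must flatten into a chain of length $n$ via repeated star-to-path rotations together with the antisymmetry of $\langle\cdot,\cdot\rangle$.

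The main obstacle lies in the bookkeeping of this branch-resolution: when $L(u)$ is an interior vertex of the chain $1\text{-}2\text{-}\cdots\text{-}(n-1)$, flattening the ``Y'' created by the attached leaf requires several star-to-path rotations, and the overall numerical factor $2^{1-n}$ must emerge correctly from combining these rotations with the outer symmetrization over $k$ and the $\mathrm{Aut}$-orbit counts. A complementary strategy that bypasses the combinatorics is to integrate both sides of \eqref{propint} against $\prod_i \de t_i\,e^{-2\pi\I z_i/t_i}/t_i^2$ along the BPS rays and show that the integrated LHS satisfies Joyce's axioms (a)--(d) for the functions $J_n$ of \eqref{axiomsJ}; Joyce's uniqueness theorem \cite{joyce2007holomorphic} would then imply the integrated identity. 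Lifting this back to the pointwise statement of Conjecture~\ref{conj-KT} would still require either an additional uniqueness principle within the class of tree-generated rational functions of the $t_i$, or an explicit residue analysis at the $t_i=0$ poles, which is a non-trivial step in its own right.
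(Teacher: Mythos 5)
Your proposal is a plan rather than a proof, and the step you yourself flag as ``the main obstacle'' --- carrying out the branch-resolution in the inductive step and showing that the repeated star-to-path rotations, combined with the symmetrization over labelings and the $\mathrm{Aut}(\cT)$-orbit counts, produce exactly the factor $2^{1-n}$ and the sum over labeled trees --- is precisely where the content of the statement lies, and it is not carried out. To be fair, the paper does not close this gap either: the statement is presented explicitly as a \emph{conjecture}, established only for $n\leq 4$ in Appendix~\ref{ap-KT} and checked numerically for random values of the $t_i$ and $\langle\gamma_i,\gamma_j\rangle$ up to $n=7$. Your basic building block is exactly the identity the paper uses in those low-rank proofs, namely \eqref{identker}, i.e.\ $k_{ij}k_{jk}=k_{ik}k_{jk}+k_{ij}k_{ik}$ with $k_{ij}=t_it_j/(t_j-t_i)$; the $n=3$ case is a single application of it and the $n=4$ cases require a carefully ordered sequence of applications, which already illustrates why an uncontrolled induction on leaves is delicate. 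So your approach is the natural one and coincides with the paper's at low rank, but as written it does not constitute a proof of the general statement.

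Your complementary strategy has a second, independent gap that you partly acknowledge. Integrating both sides of \eqref{propint} against $\prod_i \de t_i\, e^{-2\pi\I z_i/t_i}/t_i^2$ and invoking Joyce's uniqueness theorem for the functions $J_n$ would at best identify the \emph{integrated, symmetrized} combination with Joyce's; but extracting from the paper's expansion \eqref{Fours} a collection of functions $J_n$ to which the axioms \eqref{axiomsJ} can even be applied already presupposes the factorized form of the pairing-dependence that the conjecture is supposed to deliver, so this route is circular unless you first prove the pointwise identity by other means. A residue analysis at coinciding $t_i$'s or at $t_i=0$ is a plausible way to establish the rational-function identity directly (both sides are homogeneous of degree $n-1$ in the $t_i$ with only simple poles at $t_i=t_j$), and spelling out that argument --- or completing the leaf-induction with explicit control of the combinatorial factors --- would actually constitute an advance over what the paper contains.
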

This claim is established for $n\leq 4$ in Appendix \ref{ap-KT}, and has been checked for random values of
the $t_i$'s and $\langle \gamma_i, \gamma_j \rangle$'s up to $n=7$ using a computer.

Granting this fact, \eqref{propint}
allows to rewrite \eqref{Fours} into the same way as \eqref{FJoyce}, with
$J(z_1,\dots, z_n)$ given by
\be
\label{Jours}
J_n(z_1,\dots, z_n) =
\frac{z_1+\dots+z_n}{(4\pi\I)^{n-1}} \( \prod_{i=1}^n  \int_{\ell_{\gamma_i}}\frac{\de t_i}{t_i^2} \,e^{-2\pi\I z_i/t_i}\)
\prod_{i=1}^{n-1} \frac{t_i t_{i+1}}{t_{i+1}-t_i}\, .
\ee
In Appendix \ref{sec_JJt}, we prove that \eqref{Jours} satisfies the axioms \eqref{axiomsJ}.
Hence, our function $F$ coincides with the solution of the isomonodromy equation given by Joyce.

We note also that a nice representation of the multiple integral in \eqref{Jours}
is obtained by the change of variables $t_i=\I z_i/(s u_i)$ with $s, u_i\in \IR^+$ and $\sum_{i=1}^n u_i=1$.
It allows to rewrite the function $J_n$ as
\bea
J_n(z_1,\dots, z_n) &=&- \frac{\I \sum_i z_i}{(2\pi\I)^{n-1} \prod_{i=1}^n z_i}
\int_0^\infty \de s\, e^{-2\pi s}\,
\int_{0\leq w_1\leq 1\atop \sum_{i=1}^n u_1=1} \de u_1\cdots \de u_{n-1}\,
\prod_{i=1}^{n-1} \frac{z_i z_{i+1}}{z_{i+1} u_i-z_i u_{i+1}}
\nn\\
&=&\frac{\sum_i z_i}{(2\pi\I)^n\prod_{i=1}^n z_i}
\int_{0\leq u_1\leq 1\atop \sum_{i=1}^n u_1=1}
\frac{ \de u_1\cdots \de u_{n-1}}
{ \prod_{i=1}^{n-1} \left( \frac{u_i}{z_i} - \frac{u_{i+1}}{z_{i+1}}\right)}\, .
\label{defJtn}
\eea
Upon setting $v_k=\sum_{i=1}^k u_i$, this may be rewritten as an integral over the simplex
$0<v_1<v_2<\dots<v_{n-1}<1$, hence as an iterated integral in the language of \cite{chen1977iterated}.

Yet another interesting representation is obtained by a further change of variables $w_{i}=z_{i+1} u_i-z_i u_{i+1}$.
One can show that it produces the Jacobian\footnote{We remark that this change of variables can be done already in \eqref{Fours}, i.e.
it has a generalization to any tree $\cT$. In that case one defines $w_{ij}=z_{j} u_i-z_i u_{j}$
for any edge $e:i\to j$. Then the Jacobian is given by $\(\sum_i z_i\)  \frac{\prod_{e:i\to j}z_i z_j}{\prod_i z_i}$
and leads to the integral $\int_{R_\cT} \prod_{e:i\to j}  \frac{\de w_{ij}}{w_{ij}}$.
The results presented in \eqref{Jacob} and \eqref{integral-w} correspond to the particular case of the tree of
the simplest topology $\bullet\!\mbox{---}\!\bullet\!\mbox{--}\cdots \mbox{--}\!\bullet\!\mbox{---}\!\bullet\,$.
}
\be
\frac{\p\{w_{i}\}}{\p \{u_i\}}=\(\sum_{i=1}^n z_i\) \prod_{i=2}^{n-1} z_i  ,
\label{Jacob}
\ee
which cancels all $z_i$-dependent factors in \eqref{defJtn}.
Therefore, the integral can be recast as
\be
J_n(z_1,\dots, z_n)
= \frac{1}{(2\pi\I)^n} \int_{R(z_1,\dots, z_n)} \prod_{i=1}^{n-1}  \frac{\de w_{i}}{w_{i}}
\label{integral-w}
\ee
where the domain $R(z_1,\dots, z_n)\subset (\IC^\times)^n$ is the image of the
simplex $\{0<u_i<1, \sum_{i=1}^n u_i=1$\} under the map $\{u_i\}\mapsto\{w_i\}$ (notice that
$w_i$ does not vanish as long as the $z_i$'s belong
to the domain \eqref{defDn}). Thus $J_n(z_1,\dots, z_n)$ is simply the volume of the image of
$R$ under $w_i\mapsto\log w_i$. Unfortunately, we have not found a more direct way
to describe the domain $R$.

Finally, we compare \eqref{Jours} with the solution of the same set   of axioms found in  \cite{bridgeland2013stokes}
by expressing the connection $\nabla$ in terms of its Stokes data. The latter is given by
\be
\label{defJn}
\tilde J_n(z_1,\dots, z_n) = \frac{1}{(2\pi\I)^n} \sum_{\cT\in \mathbb{T}_n^{\rm pl}}  (-1)^{|V_\cT|} \prod_{v\in V_{\cT}}
M(z \vert_ {\rm ch(v)}),
\ee
where $\cT$ runs over rooted planar trees (the number of such trees is the Catalan number $C_{n-1}$),
with vertices $v$ decorated by linear combinations $z_v$ of $z_i$ such that
the root has $z_v=z_1+\dots +z_n$, the leaves of the tree have $z_v=z_i$ and $z_v=\sum_{v'\in \ch(v)} z_{v'}$
at each vertex (where $v'$ runs over the children of $v$). The factor $M(z \vert_ {\rm ch(v)})$ is
defined by evaluating the Goncharov polylogarithm \cite{Goncharov:1998kja}
\bea
\label{defMn}
M_n(z_1,\dots, z_n)
&=&  \int_{0\leq t_1\leq t_2 \leq \dots \leq t_{n-1}\leq s_n}
 \frac{\de t_1}{t_1-s_1} \wedge  \dots \wedge \frac{\de t_{n-1}}{t_{n-1}-s_{n-1}},
\eea
where $s_i=z_1+\dots+z_i$,
on the ordered set $\{z_{v'} \in \ch(v) \}$.
Since the solution to the conditions a)--d) is known to be unique \cite{joyce2007holomorphic},
the functions $\tilde J_n$ should coincide with the functions $J_n$ found above.
However, while the iterated integrals \eqref{defMn} are very similar to \eqref{defJtn}, it is unclear to us
how the various trees in \eqref{defJn} conspire to produce the much simpler result  \eqref{defJtn}.

\subsection{$\tau$ function}

The $\tau$ function was introduced in \cite{Bridgeland:2016nqw} in the case of uncoupled BPS structures,
as a solution of the following set of equations
\be
\sum_{b=1}^{\dm} \omega^{ab} \p_{z^b} \log\tau =\p_t \log\(\cX_{\gamma^a}/\cXsf_{\gamma^a}\),
\qquad
\forall a=1,\dots, \dm.
\label{eqtau}
\ee
It was explicitly found in a few cases where the solution to the Riemann-Hilbert problem was known,
but only for $\theta^a=0$.
Moreover, it was unclear why this function should exist and whether this definition makes sense more generally.
Our approach based on the TBA-like equations allows to address all these questions.

Evaluating the r.h.s. of \eqref{eqtau} as in \eqref{1term-all}, one finds that
the definition of the $\tau$ function requires the following relation
\be
\p_{z^a} \log\tau= \IS_1 \frac{q_{1,a} \, t_1}{t_1-t}\, \cX_1
\(\frac{Z_1}{t_1}+ \sum_{n=2}^\infty\prod_{i=2}^n\IS_i K_{i-1,i}\cX_i\, \frac{Z_i}{t_i}\).
\label{dertau}
\ee
It implies the integrability condition
\be
\label{taucond}
\IS_1  \frac{t_1 }{t_1-t} (q_{1,a}\, \p_{z^{b}}-q_{1,b}\, \p_{z^{a} })
\[\cX_1
\(\frac{Z_1}{t_1}+ \sum_{n=2}^\infty\prod_{i=2}^n\IS_i K_{i-1,i}\cX_i\, \frac{Z_i}{t_i}\)\]=0.
\ee
This condition is trivially satisfied in the mutually local case since the derivative $\p_{z^{b}}$ produces the factor $q_{1,b}$.
In that case, \eqref{dertau} reduces to
\be
\label{taumut}
\frac{\p\log\tau}{\p z^a}= \sum_{\gamma\in\Gamma_{\!\star}}\sigma_\gamma\bOm(\gamma)\int_{\ell_\gamma}
\frac{\de t'}{t'^2}\,\frac{q_{1,a} \,Z_\gamma}{t'-t}\,
e^{2\pi\I(\theta_\gamma-Z_\gamma/t')}.
\ee
which can be easily integrated. The result is given by
\be
\log\tau =
\frac{1}{4\pi^2}\sum_{\gamma\in\Gamma_{\!\star}}\sigma_\gamma\bOm(\gamma)\, e^{2\pi\I\theta_\gamma}
\[\int_{\ell_\gamma} \frac{\de t'}{t'} \,\frac{t}{t'-t}
\(1+2\pi\I\, \frac{Z_\gamma}{t'}\)e^{-2\pi\I Z_\gamma/t'}-\log (Z_\gamma/t)\].
\label{restau}
\ee
It provides a generalization
of the previous results \cite{Bridgeland:2016nqw,Bridgeland:2017vbr} to the case of non-vanishing $\theta^a$.
Note that $\log\tau$ is defined only up to the addition of an arbitrary function of $\theta^a$.
This fact will be important in the discussion of concrete examples in section \ref{sec-local}.

In contrast, in the mutually non-local case, the derivative $\partial_{z^b}$
acts on all $\cX_i$'s in the last term, producing the charges $q_{i,b}$.
Because the unintegrated variable $t$ is present only in the denominator involving $t_1$,
there is no symmetry between the label 1 and the labels $i$ which can ensure the integrability condition.
Moreover, one can expand the l.h.s. of \eqref{taucond} in powers of BPS indices as in \S\ref{sec_asym} and
check explicitly that the quadratic term is non-vanishing.
Thus, the integrability condition is no longer satisfied, which shows that the definition \eqref{eqtau}
of $\tau$ function does not extend beyond the case of uncoupled BPS structures.

Nevertheless, the results of \cite{Alexandrov:2017qhn} indicate
that such an extension might exist provided one suitably modifies
the defining differential equation. Indeed, in \cite[Eq.(3.5)]{Alexandrov:2017qhn}
a function $\tcJ$ was defined which by applying appropriate (non-linear) operators,
allowed to obtain all Darboux coordinates on the twistor space describing a closely related moduli space,
in the two-instanton approximation (i.e. at second order in powers of rational BPS indices).
This function can be seen as a $t$-dependent version of the instanton generating potential which, as we have explained,
coincides with the Pleba\'nski potential $W$.
However, in contrast to that potential, its generalization at higher orders is not known yet.

\section{The case of uncoupled finite BPS structures}
\label{sec-local}

In this section, we consider the case of
uncoupled BPS structures, such that $\langle \gamma,\gamma'\rangle=0$
whenever $\Omega(\gamma)$ and $\Omega(\gamma')$ are both non-zero.
Furthermore, we assume that only a finite number of BPS indices $\Omega(\gamma)$
are non-vanishing (leading to an infinite number of non-zero rational indices $\bOm(\gamma)$).

Let us start with the Pleba\'nski potential.
For a finite uncoupled BPS structure, its integral representation \eqref{Wfull-short} reduces to
\be
W=\frac{1}{2\pi\I} \sum_{\gamma\in\Gamma_{\!\star} }\sigma_\gamma \bOm(\gamma)
\int_{\ell_{\gamma}}\frac{\de t}{t^2}\,e^{2\pi\I(\theta_{\gamma} - Z_{\gamma}/t)} .
\label{Wfinite-int}
\ee
Changing the integration variable to $s=\I Z_\gamma/t$ so that the integration contour coincides with $\IR^+$,
one obtains a very simple result
\be
W=\frac{1}{(2\pi\I)^2} \sum_{\gamma\in\Gamma_{\!\star} } \frac{\sigma_\gamma\bOm(\gamma)}{Z_\gamma}\, e^{2\pi\I \theta_{\gamma}}
=\frac{1}{(2\pi\I)^2} \sum_{\gamma\in\Gamma_{\!\star} } \frac{\Omega(\gamma)}{Z_\gamma}\, \Li_3\(e^{2\pi\I \vth_{\gamma}}\),
\ee
where in the second equality we expressed the result in terms of the variables $\vth_\gamma$ defined by
\be
e^{2\pi\I \vth_\gamma}=\sigma_{\gamma}\, e^{2\pi\I \theta_\gamma},
\label{defvth}
\ee
and integer valued BPS indices so that the remaining sum is finite.
Due to the symmetry property of the BPS structure, one may
combine the contributions of the opposite vectors $\pm\gamma$ using
the relation \eqref{Li-ident},
arriving at
\be
W=-\frac{\pi\I}{3}\sum_{\gamma\in\Gamma_{\!+} } \frac{\Omega(\gamma)}{Z_\gamma}\, B_3([\vth_\gamma]),
\label{Wfinite}
\ee
where $B_3(x)$ is the Bernoulli polynomial \eqref{Berpol}. The bracket notation $[\theta]$ is defined in \eqref{defbr}
and its presence reflects the periodicity in the real part of $\theta^a$ inherent to the original representation \eqref{Wfinite-int}.
Note that due to $[-\theta]=1-[\theta]$ and $[\hf-\theta]=1-[\hf+\theta]$,
as well as the symmetry property of the Bernoulli polynomial, the Pleba\'nski potential is automatically odd
in $\theta^a$ for arbitrary quadratic refinement, as required in \cite{Bridgeland:2019fbi}.

Starting with \eqref{Ffull-short} and performing exactly the same manipulations, one can also derive
similar results for the Joyce potential, in agreement with \cite{Barbieri:2018swu}:
\bea
F&=&\frac{1}{2\pi\I} \sum_{\gamma\in\Gamma_{\!\star} } \sigma_\gamma\bOm(\gamma)\, e^{2\pi\I \theta_{\gamma}} = \frac{1}{2\pi\I} \sum_{\gamma\in\Gamma_{\!\star} }
\Omega(\gamma)\, \Li_2\(e^{2\pi\I \vth_{\gamma}}\)
\nn\\
&=&-\pi\I\sum_{\gamma\in\Gamma_{\!+} } \Omega(\gamma)\, B_2([\vth_\gamma]),
\eea
Note that our approach allows to get these potentials without evaluating explicitly the integrals
representing the holomorphic Fourier modes $\cX_\gamma$ which,
as we shall see now, are given by much more complicated expressions.

The main simplification of the uncoupled BPS structure is that the TBA-like {\it equations} \eqref{TBAeq}
become explicit integral {\it expressions} for the holomorphic Fourier modes $\cX_\gamma$.
As explained in \cite{Gaiotto:2014bza,Barbieri:2018swu} and recalled in Appendix \ref{sec-Xfinite},
the integral can be computed using Binet's second  formula for the Gamma function, leading to the generalized
Gamma function $\Lambda(z,\eta)$ defined in \eqref{defLambda}.
More precisely, one obtains
\be
\label{Xuncoupled}
\cX_\gamma(t) = e^{2\pi\I(\vth_{\gamma} - Z_\gamma/t)} \prod_{\gamma'\in \Gamma \atop \Re(Z_{\gamma'}/t)>0}
\Lambda\( \frac{Z_{\gamma'}}{t},1-[\vth_{\gamma'}] \)^{\Omega(\gamma') \langle \gamma',\gamma\rangle}.
\ee
For $\vth_\gamma=0$, \eqref{Xuncoupled} reduces to  \cite[Thm 3.2]{Bridgeland:2016nqw}
(after flipping the sign of the symplectic product, see footnote \ref{foot-sign}).
For non-vanishing $\vth_\gamma$, the solution considered in \cite{Barbieri:2018swu,Bridgeland:2019fbi}
corresponds to the analytic continuation of the branch obtained for $\Re\vth_\gamma\in (0,1)$.

Finally, the $\tau$ function is obtained from \eqref{restau}.
Rewriting it in terms integer BPS indices,
combining the contributions from $\pm\gamma$ and using the integral representation \eqref{BinetUps2}
for the generalized Barnes function $\Upsilon(z,\eta)$ defined in \eqref{defUpsilon},
we arrive after some algebra (see Appendix \ref{sec_Apptau} for details) at
\be
\log\tau = \sum_{\gamma\in \Gamma \atop \Re(Z_{\gamma}/t)>0}
\Omega(\gamma)\,\log \Upsilon\(\frac{Z_\gamma}{t},1-[\vth_\gamma]\) .
\label{taufun-res2}
\ee
For $\vth_\gamma=0$, the result \eqref{taufun-res2} reduces to \cite[Thm 3.4]{Bridgeland:2016nqw}.
The fact that the $\tau$ function \eqref{taufun-res2} is consistent with the solution \eqref{Xuncoupled},
i.e. satisfies the differential equation \eqref{eqtau}, can easily be seen from the relation \eqref{dUps}
between the generalized Barnes and Gamma functions. From \eqref{Upsexp}, it follows that the $\tau$ function
has the following expansion at large $Z_\gamma/t$
\be
\label{tauexp}
\log\Upsilon(z,\eta) = \sum_{\gamma\in \Gamma \atop \Re(Z_{\gamma}/t)>0}\[
\frac{1}{2}\, B_2([\vth_\gamma]) \log \frac{t}{Z_\gamma}
+ \sum_{k= 3}^\infty  \frac{  B_k([\vth_\gamma])}{k(k-2)}\(\frac{t}{Z_\gamma}\)^{k-2}\],
\ee
which generalizes the result \cite[Eq.(19)]{Bridgeland:2016nqw}.

\section{Refined Joyce structure}
\label{sec_def}

In this last section, we consider the deformed version of the heavenly equation \eqref{heaveq}
and isomonodromy equation \eqref{isoeq} introduced in \cite{Bridgeland:2020zjh}, and explain
how formal solutions to these equations can be constructed in terms of solutions of deformed TBA equations
proposed in \cite{Alexandrov:2019rth}, using the refined DT invariants
$\Omega(\gamma,y)$ as input data. Throughout we set $y=e^{2\pi\I\alpha}$ where $\alpha$ is the deformation parameter,
and let $\Delta(\alpha)$ be a function of $\alpha$ which behaves as
$\Delta(\alpha)=4\pi\I\alpha+O(\alpha^2)$ in the unrefined limit $\alpha\to 0$.
Thanks to conformal invariance, the choice of $\Delta(\alpha)$ is irrelevant,
but for concreteness, one may choose either $\Delta=4\pi\I\alpha$ (following
\cite{Bridgeland:2020zjh}) or $\Delta=y-1/y$
(following \cite{Alexandrov:2019rth} where this choice is required by modularity).
Moreover, we set $\kappa(x):=(y^x-y^{-x})/\Delta$ and denote $\gamma_{ij}=\langle\gamma_i,\gamma_j\rangle$.

\subsection{Refined construction}

For any two functions $f,g$ on $\cZ=\IP^1\times \cM$, let us introduce the non-commutative Moyal star product
\be
f \star g =  f \exp\[ \frac{\alpha}{2\pi\I}\sum_{a,b}\omega_{ab}\,
\overleftarrow{\p}_{\!\theta^a}\overrightarrow{\p}_{\!\theta^b} \] g
\label{starproduct-alt}
\ee
and the rescaled Moyal bracket
\be
\{f,g\}_\star=\frac{1}{\Delta}\(f \star g-g\star f\).
\label{Mbracket}
\ee
In particular, the semi-flat Fourier modes \eqref{cXsf} satisfy
\be
\begin{split}
\cXsf_{\gamma_1}(t_1)\star\cXsf_{\gamma_2}(t_2) =&\,
y^{\gamma_{12}}\cXsf_{\gamma_1}(t_1)\cXsf_{\gamma_2}(t_2),
\\
\left\{ \cXsf_{\gamma_1}(t_1) , \cXsf_{\gamma_2}(t_2) \right\}_\star =&\,
\kappa(\gamma_{12}) \cXsf_{\gamma_1}(t_1)\cXsf_{\gamma_2}(t_2).
\end{split}
\label{starXX}
\ee

In \cite{Bridgeland:2020zjh} (see also \cite{Strachan:1992em,Takasaki:1992jf} for earlier work) it was proposed
that the Joyce structure has a natural refined version described by the deformed heavenly equation
\be
\frac{\p^2 \Wr}{\p z^b\p \theta^a}-\frac{\p^2 \Wr}{\p z^a\p \theta^b}
=\left\{ \frac{\p \Wr}{\p \theta^a}\, ,\,  \frac{\p \Wr}{\p \theta^b}\right\}_\star.
\label{heaveq-ref}
\ee
Furthermore, introducing the refined Joyce potential $\Fr$ through the same relation \eqref{FW} as before,
it was shown that its Fourier modes then satisfy the deformed isomonodromy equation
\be
\de \Fr_\gamma=\sum_{\gamma_1+\gamma_2=\gamma}
 \kappa(\gamma_{12}) \, \Fr_{\gamma_1}\Fr_{\gamma_2}\, \de\log Z_{\gamma_2}.
\label{isoeq-ref}
\ee
We note that this equation can be rewritten using the same Moyal bracket as in \eqref{heaveq-ref}
so that it appears as a straightforward deformation of the version of the isomonodromy equation given in \eqref{isomondr}
\be
\frac{\p\Fr}{\p z^a}=\left\{ \Fr, \p_{\theta^a}\Wr\right\}_\star.
\label{isoref}
\ee

It turns out that it is possible to construct solutions to these equations in a way closely similar to
the unrefined construction presented above.
Let us consider the TBA-like system of equations, similar to \cite[Eq.(3.43)]{Alexandrov:2019rth},
\be
\cXr_{\gamma}(t) =\cXsf_{\gamma}(t) \star\(1
+
\sum_{\gamma'\in \Gamma_\star}
\frac{\bOm(\gamma',y) }{2\pi\I \Delta}  \, \int_{\ell_{\gamma'}} \frac{\de t'}{ t' }
\frac{t}{t'-t}\ \cXr_{\gamma'}(t')\).
\label{inteqH-star}
\ee
This system of equations will play the same role as \eqref{TBAeq1}, but it is worth stressing that
it does {\it not} reduce to \eqref{TBAeq1} in the unrefined limit $y\to1$.
In fact, the system \eqref{inteqH-star} is singular in this limit, which indicates
that the refined Fourier modes $\cXr_{\gamma}$ do not have a smooth limit as $y\to 1$.
Nevertheless, it is still possible to produce a solution for $\cXr_{\gamma}$
as a formal series in refined, rational DT invariants  by the usual iterative  procedure.

Assuming that a solution of \eqref{inteqH-star} has been found, let us define
\bea
\Wr &=&\frac{1}{2\pi\I}\, \sum_{\gamma\in \Gamma_\star} \bOm(\gamma,y) \int_{\ell_\gamma}
\frac{\de t}{t^2}\, \cXr_\gamma,
\label{Wref}
\\
\Fr &=& \frac{1}{2\pi\I}\, \sum_{a=1}^m z^a
\sum_{\gamma\in \Gamma_\star} \bOm(\gamma,y) \int_{\ell_\gamma}
\frac{\de t}{t^2}\, \p_{\theta^a}\cXr_{\gamma}.
\label{FWr}
\eea
These definitions are analogous to \eqref{Wfull-short} and \eqref{Ffull-short}, but in contrast to the unrefined case,
$\Wr$ involves no quadratic term.
It is easier to check that $\Wr$  and $\Fr$ do satisfy the relations \eqref{FW} and \eqref{defW}, and we claim
that they also satisfy the deformed heavenly and isomonodromy equations, \eqref{heaveq-ref} and \eqref{isoref}.
The proofs of these statements can be found in Appendix \ref{sec_Appdef}.

While these results allow to construct formal solutions of the deformed equations,
we do not know at this point if they converge to actual solutions, nor if they carry any physical or mathematical meaning.
They also do not seem to simplify in the mutually local case.
It would be interesting to make contact with the construction in \cite{Barbieri:2019yya} in the double $A_1$ case.

\subsection{Unrefined limit}

While the solutions to \eqref{inteqH-star} do not seem to bear any simple relation to the solutions
of the undeformed system \eqref{TBAeq1},  we claim that the refined potentials
\eqref{Wref} and \eqref{FWr} do reduce in the unrefined limit to \eqref{Wfull} and \eqref{Ffull}, respectively.
To show this, we start from the formal series obtained by iterating \eqref{inteqH-star} and plugging it into \eqref{Wref},
\be
\Wr=\frac{1}{2\pi\I}\,\sum_{n=1}^\infty\IS_1\cXsf_1 \prod_{i=2}^n \( \IS_i \Kr_{i-1,i} \cXsf_i \) y^{\sum_{k<l}\gamma_{kl}},
\label{expWr}
\ee
where we used the notations similar to \eqref{short}
\be
\Kr_{ij}=\frac{1}{2\pi\I\Delta}\,\frac{t_i t_j}{t_j-t_i}\, ,
\qquad
\IS_i=\sum_{\gamma_i\in\Gamma_{\!\star} }
\bOm(\gamma_i,y)\int_{\ell_i}\frac{\de t_i}{t_i^2}\, .
\ee
Note that in contrast to the undeformed case, the iterative solution does not involve a sum over trees (cf. \eqref{iterW})
and the effect of the Moyal product is completely captured by the last $y$-dependent factor.

To extract the unrefined limit of the resulting series, we propose the following two conjectures:
 \begin{conj}
\bea
a)&& \ \Sym\left\{y^{\sum_{k<l}\gamma_{kl}}\prod_{i=1}^{n-1}\Kr_{i,i+1} \right\}
=\frac{\Delta^{n-1}}{n}\, \Sym\left\{
\prod_{i=1}^{n-1} \kappa\( \langle \Gamma_i,\gamma_{i+1} \rangle\)
\prod_{i=1}^{n-1}\Kr_{i,i+1} \right\},
\\
b)&&  \
\frac{1}{n}\Sym\left\{
\prod_{i=1}^{n-1} \Bigl(\langle\Gamma_i,\gamma_{i+1}\rangle \,
\frac{t_i t_{i+1}}{t_{i+1}-t_i} \Bigr)\right\}
=\frac{1}{2^{n-1}}\, \Sym\left\{
\[  \sum_{\cT\in \mathbb{T}_n^\ell}\,
 \prod\limits_{e: i\to j}  \langle \gamma_i, \gamma_j \rangle \]
\prod_{i=1}^{n-1} \frac{t_i t_{i+1}}{t_{i+1}-t_i} \right\},
\nn\\&&
\label{limWn}
\eea
where $\Gamma_k=\sum_{i=1}^k \gamma_i$ and $\mathbb{T}_n^\ell$ is the set of trees defined below
\eqref{FJoyce}.
\end{conj}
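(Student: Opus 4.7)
The plan is to treat the two parts of the conjecture as separate combinatorial identities on symmetrized rational functions, and to combine them at the end. Both identities compare expressions carrying powers of $1/\Delta$ from the $\Kr$-factors; these divergences must cancel after the symmetrization operator $\Sym$ is applied (for instance, consistency of (a) at $\alpha=0$ requires $\Sym\{\prod_{i=1}^{n-1} t_it_{i+1}/(t_{i+1}-t_i)\}=0$, which one checks for small $n$ via standard partial-fraction identities). I would therefore view each identity as an equality between analytic functions of $\alpha$ (with $y=e^{2\pi\I\alpha}$) and match them order by order in the Taylor expansion around $\alpha=0$.

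For part (a), I would first use $\sum_{k<l}\gamma_{kl}=\sum_{i=1}^{n-1}\langle\Gamma_i,\gamma_{i+1}\rangle$ to rewrite the left-hand $y$-factor as $\prod_{i=1}^{n-1}y^{\langle\Gamma_i,\gamma_{i+1}\rangle}$. The key step is to exploit the behavior of the $\Kr$-product under reversal of the path: since $\Kr_{ij}=-\Kr_{ji}$, reversal contributes $(-1)^{n-1}$, while $\langle\Gamma_i,\gamma_{i+1}\rangle$ transforms into $-\langle\Gamma_{n-i-1},\gamma_{n-i}\rangle$ up to a ``boundary'' piece $\langle\Gamma_n,\gamma_\bullet\rangle$ that should drop out under $\Sym$. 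Pairing each permutation $\pi$ with its reversal $\pi\circ\mathrm{rev}$ inside $\Sym$ should then convert $\prod y^{\langle\Gamma_i,\gamma_{i+1}\rangle}$ into the antisymmetric combination $\prod(y^a-y^{-a})=\Delta^{n-1}\prod\kappa(\langle\Gamma_i,\gamma_{i+1}\rangle)$. The residual factor $1/n$ should emerge from a subsequent averaging over cyclic rotations of the path, which the full symmetrization implicitly performs.

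For part (b), I would expand $\prod_{i=1}^{n-1}\langle\Gamma_i,\gamma_{i+1}\rangle=\sum_\phi\prod_{k=2}^n\langle\gamma_{\phi(k)},\gamma_k\rangle$, where the sum runs over functions $\phi\colon\{2,\dots,n\}\to\{1,\dots,n-1\}$ with $\phi(k)<k$; each such $\phi$ is precisely an increasing tree on $\{1,\dots,n\}$ rooted at $1$, with $(n-1)!$ trees in total. After $\Sym$ is applied, I would regroup these increasing trees by their underlying unrooted topology and by choice of root, producing a sum over $\mathbb{T}_n^\ell$. The factor $1/2^{n-1}$ on the right should encode the two possible orientations of each of the $n-1$ edges of the tree when comparing across relabelings, reflecting the mismatch between the fixed path orientation on the LHS and the $i\to j$ (with $i<j$) orientation on the RHS.

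The main obstacle is part (b), which is essentially a variant of Conjecture~\ref{conj-KT} that the authors establish only for $n\leq 4$ analytically and verify numerically up to $n=7$. A rigorous all-$n$ proof will likely require a matrix-tree or generating-function argument that exhibits the precise combinatorial bijection relating the $(n-1)!$ increasing trees rooted at $1$ to the $n^{n-2}$ unrooted labeled trees, after the symmetrization and path-weighting are absorbed. If no such closed-form argument presents itself, the fallback is to mirror the authors' strategy for Conjecture~\ref{conj-KT}: prove small-$n$ cases analytically and check larger $n$ on a computer.
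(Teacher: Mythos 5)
First, a point of reference: the paper does not prove this statement. It is presented as Conjecture~2 and is only ``checked \dots using computer experiments with random values of $t_i$ and $\gamma_{ij}$'', with no analytic argument even for small $n$ (unlike Conjecture~\ref{conj-KT}, for which the paper proves the $n=3,4$ cases in Appendix~\ref{ap-KT}). So there is no proof in the paper to match your argument against; the question is whether your sketch closes the gap on its own, and it does not.

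For part (a), the rewriting $\sum_{k<l}\gamma_{kl}=\sum_{i=1}^{n-1}\langle\Gamma_i,\gamma_{i+1}\rangle$ is correct, but the central step fails. You claim that pairing each permutation with its reversal converts $\prod_i y^{a_i}$ (with $a_i=\langle\Gamma_i,\gamma_{i+1}\rangle$) into $\Delta^{n-1}\prod_i\kappa(a_i)=\prod_i\bigl(y^{a_i}-y^{-a_i}\bigr)$. This cannot work as stated: the expanded product on the right contains $2^{n-1}$ distinct sign patterns, whereas a pairing of two permutations supplies only two terms. Moreover, under reversal one has $\Gamma_i^{\mathrm{rev}}=\Gamma_n-\Gamma_{n-i}$, so $a_i^{\mathrm{rev}}=\langle\Gamma_n,\gamma_{n-i}\rangle-a_{n-i-1}$; the piece $\langle\Gamma_n,\gamma_{n-i}\rangle$ is generically nonzero, and asserting that it ``should drop out under $\Sym$'' is exactly the kind of cancellation that must be proved, not assumed. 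The residual factor $1/n$ attributed to ``averaging over cyclic rotations'' is likewise unsupported: the path $1\to 2\to\cdots\to n$ has no cyclic symmetry, and $\Sym$ averages over all $n!$ permutations, so there is no distinguished order-$n$ subgroup producing this factor.

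For part (b), the expansion of $\prod_{i=1}^{n-1}\langle\Gamma_i,\gamma_{i+1}\rangle$ as a sum over the $(n-1)!$ increasing trees rooted at $1$ is a correct and promising first step, but the regrouping of these trees by unrooted topology --- together with the claim that the path kernel $\prod_i t_it_{i+1}/(t_{i+1}-t_i)$ conspires under $\Sym$ to produce the weight $2^{-(n-1)}\sum_{\cT\in\mathbb{T}_n^\ell}\prod_{e:i\to j}\langle\gamma_i,\gamma_j\rangle$ --- is precisely the content of the conjecture, and you give no argument for it; this is essentially the same combinatorial difficulty as Conjecture~\ref{conj-KT}. Your own stated fallback (small-$n$ analytic checks plus computer verification) is what the authors in fact do, so the proposal does not go beyond the paper, and the parts that attempt to contain the unjustified steps above. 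One small positive remark: the vanishing $\Sym\{\prod_i t_it_{i+1}/(t_{i+1}-t_i)\}=0$ that you invoke as a consistency check is established for all $n$ in the paper (equation \eqref{sumsym}, proved by an inductive residue argument), so you need not restrict that particular ingredient to small $n$.
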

\noindent
We have checked these two conjectures using computer experiments
with random values of $t_i$ and $\gamma_{ij}$.
The first property allows to rewrite the integration kernel in the series \eqref{expWr} in the form which has
a well defined limit $y\to 1$ because in this limit $\kappa(x)\to x$.
The second property ensures the exact match with the expansion of
the unrefined Pleba\'nski potential $W$, which is obtained either by combining \eqref{iterW} with Conjecture \ref{conj-KT},
or simply by dividing each Fourier modes \eqref{FJoyce} by $Z_\gamma$.
The same reasoning shows that $\Fr$ also reduces as $y\to 1$ to its unrefined counterpart.

\section*{Acknowledgements}
We are indebted to Tom Bridgeland for his nice set of lectures at the work of the Simons collaboration
on Special Holonomy in Geometry, Analysis and Physics (Jan 11-13, 2021),
which stimulated our interest in this topic, and for subsequent
correspondence. We are also grateful to Andy Neitzke for discussions.

%----------------------------------------------------------------------------
\appendix
%----------------------------------------------------------------------------

\section{Proofs}
\label{ap-proof}

In this appendix we present several proofs which have been omitted in the main text.

\subsection{Heavenly equation}
\label{ap-proofHE}

Let us show that the function $W$ \eqref{Wfull-short} satisfies the heavenly equation \eqref{heaveq1}.
As in the proof of the section condition,
we do this with help of an infinite series generated by iterating the relation \eqref{derX-sh}.
Namely, using \eqref{p2W}, we get
\bea
&&
\frac{\p^2 W}{\p z^a\p \theta^b}-\frac{\p^2 W}{\p z^b\p \theta^a}
-\frac{1}{(2\pi)^2}\sum_{c,d}\omega^{cd} \, \frac{\p^2 W}{\p \theta^a\p\theta^c}\, \frac{\p^2 W}{\p \theta^b\p\theta^d}
\\
&=& -\IS_1\, \frac{1}{t_1}\(q_{1,a}
\p_{\theta^b} \cX_1 -q_{1,b}\p_{\theta^a}\cX_1\)
-\frac{1}{(2\pi)^2}\, \IS_1 \p_{\theta^a}\cX_1 \IS_2
\p_{\theta^b} \cX_2 \gamma_{12}
\nn\\
&=& 2\pi\I\, \IS_1 \frac{1}{t_1}\, \cX_1 \sum_{n=1}^\infty \prod_{i=2}^{n} \( \IS_i K_{i-1,i} \cX_i \)
\(q_{1,b} q_{n,a}-q_{1,a} q_{n,b}\)
\nn\\
&&
+ \IS_1 \cX_1\sum_{n'=1}^\infty \prod_{i=2}^{n'} \( \IS_i K_{i-1,i} \cX_i \)q_{n',a}
\IS_{n'+1}\cX_{n'+1} \langle\gamma_1,\gamma_{n'+1} \rangle
\sum_{n''=1}^\infty \prod_{i=n'+2}^{n'+n''} \( \IS_i K_{i-1,i} \cX_i \)q_{n'+n'',b}
\nn\\
&=& 2\pi\I \IS_1 \cX_1 q_{1,a}\sum_{n=1}^\infty \prod_{i=2}^n \( \IS_i K_{i-1,i} \cX_i \)q_{n,b}
\(\frac{1}{t_n}-\frac{1}{t_1}+\sum_{i=1}^{n-1}\frac{t_{i+1}-t_i}{t_i t_{i+1}}\)=0,
\eea
where in the last line we inverted the labeling of first $n'$ variables and set $n=n'+n''$.
$\Box$

\subsection{Isomonodromy equation}
\label{ap-proofISO}

To prove that the isomonodromy equation  \eqref{isoeq},
we first rewrite it in terms of the potentials $F$ and $W$ which can be achieved by multiplying by $e^{2\pi\I \theta_\gamma}$
and summing over the lattice. This leads to the following equation
\be
\de F=\frac{1}{(2\pi\I)^2}\sum_{a, b,c}\omega^{ab}\p_{\theta^a}F \p_{\theta^b}\p_{\theta^c}W\, \de z^a,
\label{isomondr}
\ee
where the differential is supposed to act only on the variables $z^a$.
Substituting the expression \eqref{Ffull-short} for the Joyce potential and using \eqref{p2W} and the equation \eqref{TBAeq-sh},
one arrives at the condition
\be
\IS_1 \[\cX_1\(1-2\pi\I\,\frac{Z_{\gamma_1}}{t_1}\)\de Z_{\gamma_1} +Z_{\gamma_1}\cX_1\IS_2 K_{12}\de \cX_2 \]
=\frac{1}{(2\pi\I)^2} \sum_{a,b}\omega^{ab}\IS_1Z_{\gamma_1}
\p_{\theta^a}\cX_1\IS_2 \p_{\theta^b} \cX_{\gamma_2}\de Z_{\gamma_2}.
\label{isoeq-proof1}
\ee
Then note that the first term on the l.h.s. can be rewritten as
\be
\begin{split}
&
\IS_1 \cX_1\(1-2\pi\I\,\frac{Z_{\gamma_1}}{t_1}\)\de Z_{\gamma_1}
=\IS_1 \(\(1-t_1\p_{t_1}\)\cX_1+\cX_1 \,t_1\p_{t_1}\IS_2 K_{12} \cX_2\)\de Z_{\gamma_1}
\\
=&\, \IS_1 \cX_1\,\de Z_{\gamma_1}\IS_2 K_{12} t_2\p_{t_2}\cX_2.
\end{split}
\ee
Substituting this result into \eqref{isoeq-proof1} and bringing all terms to the l.h.s., the resulting expression becomes
\bea
&&
\IS_1\cX_1\,\de Z_{\gamma_1}\,\IS_2 K_{12}t_2\p_{t_2}\cX_2+\IS Z_{\gamma_1}\cX_1\IS_2 K_{12}\de\cX_2
-\frac{1}{(2\pi\I)^2} \sum_{i,j}\omega^{ab}\IS_1 Z_{\gamma_1}
\p_{\theta^a} \cX_1\IS_2\p_{\theta^b} \cX_{\gamma_2}\de Z_{\gamma_2}
\nn\\
&=&2\pi\I\[\sum_{n=1}^\infty\IS_1\cX_1\,\de Z_{\gamma_1} \prod_{i=2}^n \(\IS_i K_{i-1,i} \cX_i \) \frac{Z_{\gamma_n}}{t_n}
-\sum_{n=1}^\infty\IS_1\cX_1\,Z_{\gamma_1} \prod_{i=2}^n \(\IS_i K_{i-1,i} \cX_i \) \frac{\de  Z_{\gamma_n}}{t_n}\]
\nn\\
&&
-\IS_1 Z_{\gamma_1}\cX_1\sum_{n'=1}^\infty \prod_{i=2}^{n'} \( \IS_i K_{i-1,i} \cX_i \)
\IS_{n'+1}\cX_{n'+1}\de Z_{\gamma_{n'+1}} \sum_{n''=1}^\infty \prod_{i=n'+2}^{n'+n''}
\( \IS_i K_{i-1,i} \cX_i \)\langle\gamma_{n'},\gamma_{n'+n''}\rangle
\nn\\
&=&2\pi\I\sum_{n=1}^\infty\IS_1 Z_{\gamma_1}\cX_1\prod_{i=2}^n \( \IS_i K_{i-1,i} \cX_i\)
\(\frac{1}{t_1}-\frac{1}{t_n}-\sum_{i=1}^{n-1}\frac{t_{i+1}-t_i}{t_it_{i+1}}\)\de  Z_{\gamma_n}=0,
\eea
where in the last line we inverted the labeling of the variables labeled from $n'+1$ to $n'+n''$ on the previous line.
$\Box$

\subsection{Conjecture 1 at low rank}
\label{ap-KT}

Here we prove Conjecture \ref{conj-KT} presented in \S\ref{sec_asym} in the simplest cases $n=3$ and $n=4$.
Let us denote $\gamma_{ij}=\langle\gamma_i,\gamma_j\rangle$ and $k_{ij}=\frac{t_i t_j}{t_j-t_i}$ so that $K_{ij}=\gamma_{ij} k_{ij}/(2\pi\I)$.
The functions $k_{ij}$ satisfy
\be
k_{ij}k_{jk}-k_{ik}k_{jk}-k_{ij}k_{ik}=0.
\label{identker}
\ee
The proof can be reduced essentially to a repetitive use of this identity.

\subsubsection{$n=3$}

In this case we need to show that
\be
\begin{split}
\Sym\Bigl[ \gamma_{12}\gamma_{23} \,k_{12} k_{23}\Bigr]
=&\, \hf\Sym\Bigl[ \( \gamma_{12}\gamma_{23}+ \gamma_{12}\gamma_{13}+ \gamma_{13}\gamma_{23}\) k_{12} k_{23}\Bigr]
\\
=&\, \hf\Sym\Bigl[ \gamma_{12}\gamma_{23}\(k_{12} k_{23}+k_{12}k_{13}+k_{13}k_{23}\)\Bigr].
\end{split}
\label{conjprove3}
\ee
This is a direct consequence of \eqref{identker} because it ensures that the expression in the round brackets on the r.h.s.
is equal to $2k_{12}k_{23}$.

\subsubsection{$n=4$}

In this case we need to prove two relations
\be
\begin{split}
&\,\Sym\Bigl[ \gamma_{12}\gamma_{23}\gamma_{34} \,k_{12} k_{23}k_{34}\Bigr]
= \frac14\Sym\Bigl[ \( \gamma_{12}\gamma_{23}\gamma_{34}
+ \gamma_{12}\gamma_{13}\gamma_{34}+\gamma_{12}\gamma_{24}\gamma_{34}+\gamma_{12}\gamma_{14}\gamma_{34}
\right.
\\
&\, \qquad
+ \gamma_{13}\gamma_{23}\gamma_{24}+\gamma_{14}\gamma_{23}\gamma_{24}+\gamma_{12}\gamma_{23}\gamma_{14}
+\gamma_{13}\gamma_{24}\gamma_{34}+\gamma_{12}\gamma_{13}\gamma_{24}+\gamma_{14}\gamma_{23}\gamma_{34}
\\
&\, \left.\qquad
+\gamma_{13}\gamma_{23}\gamma_{14}+\gamma_{13}\gamma_{14}\gamma_{24}
\)k_{12} k_{23}k_{34}\Bigr],
\end{split}
\ee
\be
\frac13 \Sym\Bigl[ \gamma_{12}\gamma_{13}\gamma_{14} \,k_{12} k_{13}k_{14}\Bigr]
= \frac14\Sym\Bigl[ \( \gamma_{12}\gamma_{13}\gamma_{14}+\gamma_{12}\gamma_{23}\gamma_{24}+
\gamma_{13}\gamma_{23}\gamma_{34}+\gamma_{14}\gamma_{24}\gamma_{34}\)k_{12} k_{23}k_{34}\Bigr].
\label{rel2-kg}
\ee
Proceeding as in \eqref{conjprove3}, in each relation we rewrite all terms so that
they have the same product of $\gamma_{ij}$ factors.

The first relation is then equivalent to
\be
\begin{split}
&\,
k_{12}k_{23}k_{34}+k_{12}k_{13}k_{34}+k_{12}k_{24}k_{34}+k_{12}k_{14}k_{34}+k_{13}k_{23}k_{24}+k_{13}k_{24}k_{34}
+k_{13}k_{23}k_{14}
\\
&\,
+k_{14}k_{23}k_{34}
+k_{12}k_{23}k_{14}+k_{14}k_{23}k_{24}
+k_{12}k_{13}k_{24}-k_{13}k_{14}k_{24}=4 k_{12}k_{23}k_{34}.
\end{split}
\label{rel-k1}
\ee
The last 8 terms on the l.h.s. combined pairwise using the identity \eqref{identker} give
\be
2k_{13}k_{23}k_{34}+k_{12}k_{23}k_{24}+k_{12}k_{13}k_{14}.
\ee
The last two terms can in turn be combined with the third and fourth terms in \eqref{rel-k1} so that
the full l.h.s. becomes
\be
2\(k_{12}k_{23}k_{34}+k_{12}k_{13}k_{34}+k_{13}k_{23}k_{34}\).
\ee
Due to \eqref{identker}, the sum of the last two terms here is equal to the first one
so that one indeed obtains $4 k_{12}k_{23}k_{34}$ as required.

The second relation \eqref{rel2-kg} follows from
\be
\begin{split}
&\,
k_{12}k_{23}k_{34}+k_{12}k_{14}k_{23}-k_{13}k_{14}k_{23}+k_{12}k_{13}k_{24}-k_{13}k_{14}k_{24}
-k_{13}k_{23}k_{24}-k_{14}k_{23}k_{24}
\\
&\,
+k_{12}k_{13}k_{34}-k_{12}k_{14}k_{34}+k_{14}k_{23}k_{34}
-k_{12}k_{24}k_{34}-k_{13}k_{24}k_{34}=4k_{12}k_{13}k_{14}.
\end{split}
\ee
To prove it, we note that combining the following pairs of terms on the l.h.s., (2,3), (4,5), (8,9), (1,11), (6,10), (5,12),
one finds
\be
3k_{12}k_{13}k_{14}+k_{12}k_{23}k_{24}+k_{14}k_{24}k_{34}-k_{13}k_{23}k_{34}.
\ee
This further can be rewritten as
\be
\begin{split}
&\,
3k_{12}k_{13}k_{14}+(k_{12}k_{13}+k_{13}k_{23})k_{24}+k_{24}(k_{13}k_{34}-k_{13}k_{14})
-k_{13}(k_{24}k_{34}+k_{23}k_{24})
\\
=&\, 3k_{12}k_{13}k_{14}+k_{12}k_{13}k_{24}-k_{24}k_{13}k_{14}
\end{split}
\ee
Since the sum of the last two terms is combined into the first one, one indeed obtains $4k_{12}k_{13}k_{14}$ as required.

\subsection{Iterated integrals}
\label{sec_JJt}

Here we prove that the functions $J(z_1,\dots, z_n)$ defined in \eqref{Jours}
satisfies the axioms a)-d) spelled out in \eqref{axiomsJ}.

\noindent a) The homogeneity and holomorphicity are manifest from the representation on the second line of \eqref{defJtn}.

\noindent b)
Let us denote $z_{k,l}=\sum_{i=k}^l z_i$ and $z=z_{1,n}$.
Then bringing all terms in the recursion relation \eqref{dFJoyce} to one side, one finds
\be
\label{dFJoyce-our}
\begin{split}
&\,
\de J_n(z_1,\dots, z_n)
- \sum_{k=1}^{n-1}\, J_k(z_1,\dots z_k)\,J_{n-k}(z_{k+1},\dots z_n)
\left[ \frac{\de z_{k+1,n}}{z_{k+1,n}} -\frac{\de z_{1,k}}{z_{1,k}}\right]
\\
=&\,
\frac{1}{(2\pi\I)^{n-1}}\prod_{i=1}^n\( \int_{\ell_{\gamma_i}}\frac{\de t_i}{t_i^2}\, e^{-2\pi\I z_i/t_i}\)
\prod_{i=1}^{n-1} \frac{t_i t_{i+1}}{t_{i+1}-t_i}
\[\de z- 2\pi\I z\sum_{k=1}^{n}\frac{\de z_k}{t_k}
\right.
\\
&\,
\left. \qquad
-2\pi\I \sum_{k=1}^{n-1}\(\frac{1}{t_k}-\frac{1}{t_{k+1}}\)\(z_{1,k}\de z_{k+1,n} -z_{k+1,n}\de z_{1,k}\)\].
\end{split}
\ee
Since the  last line sums up to
$-2\pi\I\sum_{k=1}^{n}\frac{1}{t_k}\( z_k\de z-z\de z_k\)$,
the total expression reduces to
\be
\frac{\de z}{(2\pi\I)^{n-1}}\prod_{i=1}^n\( \int_{\ell_{\gamma_i}}\frac{\de t_i}{t_i^2}\, e^{-2\pi\I z_i/t_i}\)
\prod_{i=1}^{n-1} \frac{t_i t_{i+1}}{t_{i+1}-t_i}
\(1- 2\pi\I \sum_{k=1}^{n}\frac{z_k}{t_k} \).
\ee
Changing variables to $t_i=\I z_i/(s u_i)$ with $s, u_i\in \IR^+$ and $\sum_{i=1}^n u_i=1$,
the integral factorizes exactly as in \eqref{defJtn} into
\be
- \frac{\I \de z}{(2\pi\I)^{n-1} \prod_{i=1}^n z_i}
\int_0^\infty \de s\(1-2\pi s\) e^{-2\pi s}\,
\int_{0\leq u_1\leq 1\atop \sum_{i=1}^n u_1=1} \de u_1\cdots \de u_{n-1}\,
\prod_{i=1}^{n-1} \frac{z_i z_{i+1}}{z_{i+1} u_i-z_i u_{i+1}}\, .
\ee
Since the integral over $s$ vanishes, this proves the recursion relation \eqref{dFJoyce}.

\noindent c) The vanishing property \eqref{vanishsym} is a consequence of the identity
\be
\label{sumsym}
\sum_{s\in \Sigma_n}\prod_{i=1}^{n-1}\frac{t_{s(i)} t_{s(i+1)}}{t_{s(i+1)}-t_{s(i)}}=0.
\ee
To show that this holds, let $u_i=1/t_i$, and consider
\be
S_n(u_1,\dots, u_n) = \sum_{s\in \Sigma_n}\prod_{i=1}^{n-1}
\frac{1}{u_{s(i)}- u_{s(i+1)}}\, .
\ee
We can show that $S_n=0$ inductively on $n$. For $n=1$ and $n=2$, this is clear.
For $n>2$, the residue of $S_n$ at $u_{n-1}=u_n$ is equal to $S_{n-2}(u_1,\dots, u_{n-2})$,
which vanishes by the induction hypothesis. By symmetry, the same holds for the residue at $u_i=u_j$.
Since $S_n$ is a rational function with no poles, it must be a constant.
And since it vanishes at large $u_i$, it vanishes identically.

\noindent d) The growth condition is evident from the representation \eqref{integral-w}
since  singularities can only arise when the integration region $R_n(z_1,\dots, z_n)$ touches
one of the divisors $w_i=0$, leading at most to logarithmic singularities. The behavior
as $z_k\to 0$ is  easily read off from \eqref{defJtn}:
\bea
J_n(z_1,\dots, z_n) &\stackrel{z_1\to 0}{\sim} & -\frac{1}{(2\pi\I)^2}\,  J_{n-1}(z_2,z_3,\dots, z_n)\log(z_1),
\nn\\
J_n(z_1,\dots, z_n) &\stackrel{z_n\to 0}{\sim} & \frac{1}{(2\pi\I)^2}\,  J_{n-1}(z_1,\dots, z_{n-1})\log(z_n),
\eea
where we used the explicit form of $J_2$ \eqref{J2},
whereas $J_n(z_1,\dots, z_n)$ stays finite as $z_k\to 0$ for $1<k<n$, as the logarithmic
divergences at $u_k=z_k u_{k\pm 1}/z_{k\pm 1}$ cancel. $\Box$

\subsection{Refined potentials}
\label{sec_Appdef}

Here we prove that the refined versions of the Pleba\'nski the Joyce potentials, \eqref{Wref} and \eqref{FWr},
satisfy the deformed heavenly equation \eqref{heaveq-ref} and isomonodromy equation \eqref{isoref}, respectively.
As in the unrefined case, the proof relies on an asymptotic expansion. But the difference is that now this will be
the expansion in powers of $\cXsf_\gamma$.

Thus, the starting point to establish \eqref{heaveq-ref} is the formal series \eqref{expWr}
which implies
\bea
\frac{\p^2 \Wr}{\p z^a\p \theta^b}&=&
-2\pi\I\sum_{n=1}^\infty\IS_1\cXsf_1 \prod_{i=2}^n \( \IS_i \Kr_{i-1,i} \cXsf_i \)
y^{\sum_{k<l}\gamma_{kl}}\sum_{k=1}^n\frac{q_{k,a}}{t_k}\sum_{\ell=1}^n q_{\ell,b}\, ,
\\
\p_{\theta^a} \Wr \star\p_{\theta^b} \Wr&=&
\sum_{n'=1}^\infty \IS_1 \cXsf_1 \prod_{i=2}^{n'} \( \IS_i \Kr_{i-1,i} \cXsf_i \)
y^{\sum_{l>k=1}^{n'} \gamma_{kl}}
\sum_{k=1}^{n'} q_{k,a}
\\
&\times&
\sum_{n''=1}^\infty\IS_{n'+1}\cXsf_{n'+1} \prod_{i=n'+2}^{n'+n''} \( \IS_i \Kr_{i-1,i} \cXsf_i \)
y^{\sum_{l>k=n'+1}^{n'+n''} \gamma_{kl}}\sum_{\ell=n'+1}^{n'+n''} q_{\ell,b}
\, y^{\sum_{k=l}^{n'}\sum_{l=n'+1}^{n'+n''}\gamma_{kl}}
\nn\\
&=&\sum_{n=1}^\infty \IS_1 \cXsf_1 \prod_{i=2}^n \( \IS_i \Kr_{i-1,i} \cXsf_i \)
y^{\sum_{k<l}\gamma_{kl}}
\sum_{m=1}^{n-1} \frac{\sum_{k=1}^m q_{k,a}
\sum_{\ell=m+1}^n q_{\ell,b}}{\Kr_{m,m+1}}
\nn
\eea
where in the last line we, as usual, set $n=n'+n''$.
Substituting these results into the deformed heavenly equation, one obtains
\be
\begin{split}
&
\frac{\p^2 \Wr}{\p z^b\p \theta^a}-\frac{\p^2 \Wr}{\p z^a\p \theta^b}
-\left\{ \frac{\p \Wr}{\p \theta^a}\, ,\,  \frac{\p \Wr}{\p \theta^b}\right\}_\star
\\
=&\, 2\pi\I \sum_{n=1}^\infty \IS_1 \cXsf_1 \prod_{i=2}^n \( \IS_i \Kr_{i-1,i} \cXsf_i \)
y^{\sum_{k<l}\gamma_{kl}} S_{ab},
\end{split}
\ee
where
\be
S_{ab}=\sum_{k,l=1}^N\frac{1}{t_k}
\(q_{k,a}q_{l,b}-q_{k,b}q_{l,a}\)
-\sum_{m=1}^{n-1} \(\frac{1}{t_m}-\frac{1}{t_{m+1}}\)\sum_{k=1}^m\sum_{l=m+1}^n
\(q_{k,a}q_{l,b}-q_{k,b}q_{l,a}\).
\ee
It is straightforward to verify that $S_{ab}$ actually vanishes. Indeed, denoting $Q_{m,a}=\sum_{k=1}^m q_{k,a}$, we have
\bea
S_{ab}&=& \sum_{m=1}^n\frac{1}{t_m}\(q_{m,a}Q_{n,b}-q_{m,b}Q_{n,a}\)
-\sum_{m=1}^{n-1} \(\frac{1}{t_m}-\frac{1}{t_{m+1}}\)
\(Q_{m,a}Q_{n,b}-Q_{m,b}Q_{n,a}\)
\nn\\
&=& \frac{1}{t_n}\(q_{n,a}Q_{n,b}-q_{n,b}Q_{n,a}\)
+\frac{1}{t_n}\(Q_{n-1,a}Q_{n,b}-Q_{n-1,b}Q_{n,a}\)=0.
\eea
Thus, the deformed heavenly equation \eqref{heaveq-ref} holds.

In a similar way, to establish \eqref{isoref}, we compute
\bea
&&
\de \Fr-\sum_{a}\{ \Fr, \p_{\theta^a}\Wr\}_\star\,\de z^a
\nn\\
&=&
\sum_{n=1}^\infty\IS_1\cXsf_1 \prod_{i=2}^n \( \IS_i \Kr_{i-1,i} \cXsf_i \)
\(\de \zeta_n-2\pi\I \zeta_n\sum_{k=1}^n \frac{\de Z_k}{t_k}\)
y^{\sum_{k<l}\gamma_{kl}}
\nn\\
&&-\sum_{n'=1}^\infty \IS_1 \cXsf_1 \prod_{i=2}^{n'} \( \IS_i \Kr_{i-1,i} \cXsf_i \)
y^{\sum_{l>k=1}^{n'}\gamma_{kl}}
\sum_{k=1}^{n'} Z_k
\label{proof-isoref}\\
&&\times
\sum_{n''=1}^\infty\IS_{n'+1}\cXsf_{n'+1} \prod_{i=n'+2}^{n'+n''} \( \IS_i \Kr_{i-1,i} \cXsf_i \)y^{\sum_{l>k=n'+1}^{n'+n''}\gamma_{kl}}
\, \kappa\(\sum_{k=l}^{n'}\sum_{l=n'+1}^{n'+n''}\gamma_{kl}\)\sum_{k=n'+1}^{n'+n''}\de Z_k.
\nn
\eea
where $Z_k=Z_{\gamma_k}$, $\zeta_m=\sum_{k=1}^m Z_k$ and the function $\kappa(x)$ was defined at the beginning of \S\ref{sec_def}.
Further noting that
\bea
\IS_1\cXsf_1 \prod_{i=2}^n \( \IS_i \Kr_{i-1,i} \cXsf_i \) &=&
\IS_1 t_1\p_{t_1}\[\cXsf_1 \prod_{i=2}^n \( \IS_i \Kr_{i-1,i} \cXsf_i \)\]
\nn\\
&=& \IS_1\cXsf_1\IS_2 \Kr_{12}\( 2\pi\I\, \frac{Z_1}{t_1}+t_2\p_{t_2}\)\[\cXsf_2\prod_{i=3}^n \( \IS_i \Kr_{i-1,i} \cXsf_i \)\]
\nn\\
&=&2\pi\I \IS_1\cXsf_1 \prod_{i=2}^n \( \IS_n \Kr_{i-1,i} \cXsf_i \)\sum_{k=1}^n \frac{Z_k}{t_k}\, ,
\eea
we see that \eqref{proof-isoref} can be rewritten as
\be
2\pi\I \sum_{n=1}^\infty \IS_1 \cXsf_1 \prod_{i=2}^n \( \IS_i \Kr_{i-1,i} \cXsf_i \)
y^{\sum_{k<l}\gamma_{kl}} S'
\ee
where
\bea
S'&=&
\sum_{k=1}^n \frac{1}{t_k}\(Z_k\de \zeta_n -\zeta_n\de Z_k\)
-\sum_{m=1}^{n-1} \(\frac{1}{t_m}-\frac{1}{t_{m+1}}\)\(\zeta_m\de \zeta_n-\zeta_n\de \zeta_m\)
\nn\\
&=& \frac{1}{t_n}\(Z_n\de \zeta_n -\zeta_n\de Z_n\)
+\frac{1}{t_n}\(\zeta_{n-1}\de \zeta_n-\zeta_n \de \zeta_{n-1}\)=0.
\eea
This proves the deformed isomonodromy equation \eqref{isoref}.

\section{Special functions and useful identities}
\label{ap-fun}

In this appendix we collect the definitions and some properties of several special functions
as well as various useful identities used in \S\ref{sec-local}.

\subsection{Miscellaneous}

\subsubsection*{Bernoulli polynomials}

The Bernoulli polynomials have the following generating function
\be
\sum_{m=0}^\infty  \frac{x^m}{m!}\, B_m(\xi)= \frac{x\, e^{\xi x}}{e^x-1}.
\label{genBn}
\ee
The first few polynomials are given by
\be
\begin{split}
B_0(x)=&\, 1,
\\
B_1(x)=&\, x-\hf\, ,
\\
B_2(x)=&\, x^2-x+\frac16\, ,
\\
B_3(x)=&\, x^3-\frac32\, x^2+\frac12\,  x.
\end{split}
\label{Berpol}
\ee
At $x=0$ they reduce to Bernoulli numbers $B_n$ and have the following symmetry property
\be
B_n(1-x)=(-1)^n B_n(x).
\label{symBn}
\ee
Importantly, the Bernoulli polynomials arise in the inversion formula for polylogarithms, namely,
\be
\Li_n(e^{2\pi\I x})+(-1)^n\Li_n(e^{-2\pi\I x})=-\frac{(2\pi\I)^n}{n!}\, B_n([x]),
\label{Li-ident}
\ee
where
\be
[x]=
\begin{cases}
x-\fl{\Re x},  & \quad \mbox{if } \Im x\ge 0,
\\
x+\fl{-\Re x}+1,  & \quad \mbox{if } \Im x< 0.
\end{cases}
\label{defbr}
\ee
Note that for $\Im x\ne 0$ or $x\notin \IZ$, the bracket satisfies $[-x]=1-[x]$.

\subsubsection*{Useful identities}

For $\Re a,\Re b>0$, one has
\begin{subequations}
\bea
\int_0^\infty \frac{\de x}{x}\[e^{-ax} -e^{-bx}\]&=& -\log \frac{a}{b}\, ,
\label{int-expab1}
\\
\int_0^\infty \frac{\de x}{x^2}\[e^{-ax} -(1-(a-b)x)\,e^{-bx}\]&=& b-a+a\log \frac{a}{b}\, ,
\label{int-expab2}
\\
\int_0^\infty \frac{\de x}{x^3}\[e^{-ax} -\(1-(a-b)x+\hf\,(a-b)^2 x^2\)\,e^{-bx}\]&=&
\frac{3a^2+b^2}{4}-ab-\frac{a^2}{2}\,\log \frac{a}{b}\qquad
\label{int-expab3}
\eea
\label{int-expab}
\end{subequations}

\subsection{Integral representations of Gamma and Barnes functions}

\subsubsection*{Gamma function}

The logarithm of the Gamma function has two integral representations due to Binet
which are both valid for $\Re z>0$ \cite[p248-250]{whittaker2020course}:
\begin{itemize}
\item
first Binet formula
\be
\label{Binet1}
\log\Gamma(z) = \left(z-\frac12\right) \log z - z + \frac12\,\log(2\pi) +
\int_0^\infty \frac{\de s}{s} \(\hf-\frac{1}{s}+\frac{1}{e^s-1} \) e^{-z s},
\ee
\item
second Binet formula
\be
\log\Gamma(z)
= \(z-\frac12\) \log z - z + \frac12\,\log(2\pi) - \frac{z}{\pi} \int_0^{\infty} \frac{\de s}{s^2+z^2}\,
\log\( 1- e^{-2\pi s} \).
\label{Binet2}
\ee
\end{itemize}
The first formula has the following useful generalization due to Hermite,
valid for $\Re z>0$ and $\Re (z+\eta)>0$ \cite{nemes2013generalization}
\be
\log\Gamma(z+\eta) = \left(z+\eta-\frac12\right) \log z - z + \frac12\,\log(2\pi) +
\int_0^\infty \frac{\de s}{s} \(\eta-\hf-\frac{1}{s}+\frac{e^{(1-\eta)s}}{e^s-1} \) e^{-z s},
\label{genBinet}
\ee
which follows from \eqref{Binet1} by using the identities \eqref{int-expab}.

\subsubsection*{Barnes function}

The Barnes function (see e.g. \cite{Vigneras1979}) is the unique meromorphic function $G(z)$
which satisfies  the functional equation
\be
G(z+1)=\Gamma(z)\, G(z)\, ,
\label{eqGz}
\ee
the convexity condition $\frac{d^3}{dz^3}\log G(z)\geq 0$ for all $z>1$, and
the normalization condition $G(1)=1$.
It has the following useful properties:
\begin{itemize}
\item
its logarithmic derivative is
\be
\label{eqdG}
\frac{\de}{\de z} \log G(z+1) = \frac12\log (2\pi) -z + \frac12 + z \frac{\de}{\de z} \log \Gamma(z),
\ee
\item
Kinkelin's reflection formula
\bea
\log\frac{G(1-z)}{G(1+z)}
= z \log \left( \frac{\sin (\pi z)}{2\pi z}\right) -  \int_0^z \pi x\, \log\sin(\pi x)\, \de x,
\label{reflectG}
\eea
\item
two integral representations of Binet type, both valid for valid for $\Re z>0$ \cite{Adamchik2003},\footnote{The first formula
corrects a sign misprint in \cite[Eq.(25)]{Adamchik2003}.
The second is obtained from \cite[Eq.(20)]{Adamchik2003} by twice integrating by parts.
Note that the second integration produces the boundary contribution $-\frac{1}{12}\log(z)$.}
\be
\begin{split}
\log G(z+1)=&\,z\log\Gamma(z)+ \frac{z^2}{4} -\frac{B_2(z)}{2}\, \log z +\zeta'(-1) -\frac{1}{12}
\\
&\,-  \int_0^\infty \frac{\de s}{s^2}
\left( \frac12+ \frac{1}{s} +\frac{s}{12}  - \frac{1}{1-e^{-s}} \right) e^{-zs},
\end{split}
\label{BinetG1}
\ee
and
\be
\label{BinetG2}
\begin{split}
\log G(z+1) =&\,   \frac{1}{2}\(z^2-\frac16\) \log z-\frac{3z^2}{4} + \zeta'(-1) + \frac{z}{2}\, \log (2\pi)
\\
&\,
+\frac{1}{2\pi^2}\int_0^{\infty} \frac{s \de s }{s^2+z^2}\,
\Bigl[ 2\pi s  \log\(1-e^{-2\pi s} \) -  \Li_2\(e^{-2\pi s} \)  \Bigr].
\end{split}
\ee
where $B_2(z)$ is the Bernoulli polynomial and $\zeta'(-1)$ is related to Glaisher's constant
$\gamma_E$ via $\zeta'(-1)=\frac{1}{12}-\log\gamma_E$.
\end{itemize}
We will need a generalization of the first integral representation similar to Hermite's generalization \eqref{genBinet}
of the first Binet formula. It is given by the following
\begin{proposition}
For $\Re z>0$ and $\Re (z+\eta)>0$, one has
\be
\begin{split}
\log G(z+\eta+1) =&\,  \zeta'(-1) -\frac34 \,z^2-z\eta +\frac12\, (z+\eta)\log (2\pi)
+\frac12\( (z+\eta)^2-\frac{1}{6}\) \log z
\\
&\, + \int_0^\infty \frac{\de s}{s}
\left( \frac{1}{s^2}-\frac{\eta}{s}+\frac{\eta^2}{2} -\frac1{12}  - \frac{e^{(1-\eta)s}}{(e^s-1 )^2}
\right) e^{-zs}.
\end{split}
\label{BinetG1gen}
\ee
\end{proposition}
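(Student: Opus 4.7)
The plan is to verify \eqref{BinetG1gen} by showing that the right-hand side, call it $h(z,\eta)$, and $\log G(z+\eta+1)$ (i) coincide at $\eta=0$ and (ii) have the same partial derivative in $\eta$; the identity then follows from the fundamental theorem of calculus on $[0,\eta]$. This mirrors the strategy by which Hermite's generalization \eqref{genBinet} can be derived from \eqref{Binet1}.

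To establish (i), I observe that the non-integral part of $h(z,0)$ coincides on the nose with the non-integral part of Adamchik's representation \eqref{BinetG2}, which suggests it is correct but leaves the integrals to reconcile. I would instead start from the first Binet-type representation \eqref{BinetG1}, substitute the ordinary Binet formula for $z\log\Gamma(z)$ (the $\eta=0$ case of Hermite's formula \eqref{genBinet}), and combine everything into a single integral. The cross-term $z\int_0^\infty(\tfrac12-\tfrac1s+\tfrac1{e^s-1})\tfrac{e^{-zs}}{s}\de s$ is handled by writing $ze^{-zs}=-\partial_s e^{-zs}$ and integrating by parts. Using the expansion $\tfrac12-\tfrac1s+\tfrac1{e^s-1}=\tfrac{s}{12}+O(s^3)$ that follows from the generating function \eqref{genBn}, the boundary term at $s=0$ evaluates to $\tfrac1{12}$, cancelling the explicit $-\tfrac1{12}$ constant in \eqref{BinetG1}. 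The bulk contribution combines with the remaining integral of \eqref{BinetG1}, and after applying $\partial_s\tfrac1{e^s-1}=-\tfrac{e^s}{(e^s-1)^2}$ it reorganizes into $\tfrac1s\bigl(\tfrac1{s^2}-\tfrac1{12}-\tfrac{e^s}{(e^s-1)^2}\bigr)e^{-zs}$, matching $h(z,0)$ exactly.

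For (ii), direct differentiation of the right-hand side of \eqref{BinetG1gen} in $\eta$ yields $-z+\tfrac12\log(2\pi)+(z+\eta)\log z+\int_0^\infty\bigl[-\tfrac1{s^2}+\tfrac{\eta}{s}+\tfrac{e^{(1-\eta)s}}{(e^s-1)^2}\bigr]e^{-zs}\,\de s$. On the other hand, \eqref{eqdG} applied with $z$ replaced by $z+\eta$ gives $\partial_\eta\log G(z+\eta+1)=\tfrac12\log(2\pi)-(z+\eta)+\tfrac12+(z+\eta)\psi(z+\eta)$, where $\psi=(\log\Gamma)'$. I would compute $(z+\eta)\psi(z+\eta)$ by first differentiating Hermite's formula \eqref{genBinet} in $z$ to obtain an integral representation for $\psi(z+\eta)$, then multiplying by $(z+\eta)$ and converting the factor of $(z+\eta)$ inside the integral via $(z+\eta)e^{-zs}=-\partial_s e^{-zs}+\eta e^{-zs}$ followed by integration by parts. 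The boundary term vanishes, since the Hermite integrand is $O(s)$ at the origin (again by \eqref{genBn}). The off-integral residue $\tfrac{\eta(\eta-\tfrac12)}{z}$ arising from the $(z+\eta-\tfrac12)\log z$ term is absorbed into the integral by $\tfrac1z=\int_0^\infty e^{-zs}\de s$. After expanding the derivative of $\tfrac{e^{(1-\eta)s}}{e^s-1}$ and using the collapse $e^{(1-\eta)s}(e^s-1)-e^{(2-\eta)s}=-e^{(1-\eta)s}$, the two expressions for $\partial_\eta\log G(z+\eta+1)$ and $\partial_\eta h$ agree term by term.

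The main obstacle is the $\eta=0$ verification in (i): although entirely elementary, it requires delicate bookkeeping of the boundary term at $s=0$ in the integration by parts, whose precise value $\tfrac1{12}$ is exactly what is needed to neutralize the isolated constant in \eqref{BinetG1}. Step (ii) is lengthier but amounts to a routine chain of algebraic simplifications guided by the Bernoulli generating function. Once both steps are in place, integration in $\eta$ over $[0,\eta]$ yields \eqref{BinetG1gen} and completes the proof. $\Box$
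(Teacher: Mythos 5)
Your proof is correct. Step (i) is essentially the paper's own first move: the paper likewise converts \eqref{BinetG1} into the $\eta=0$ form of \eqref{BinetG1gen} (its equation \eqref{BinetG1new}) by inserting the first Binet formula for $z\log\Gamma(z)$ and integrating by parts, with the same boundary term $\tfrac1{12}$ at $s=0$ cancelling the isolated constant. Where you genuinely diverge is in passing from $\eta=0$ to general $\eta$. The paper simply substitutes $z\to z+\eta$ into \eqref{BinetG1new} and observes that $\frac{e^{s}}{(e^s-1)^2}\,e^{-(z+\eta)s}=\frac{e^{(1-\eta)s}}{(e^s-1)^2}\,e^{-zs}$, so the two integrals differ only by elementary exponential integrals of the type \eqref{int-expab}, which evaluate in closed form to $\frac{3\eta^2}{4}+\frac{z\eta}{2}-\frac12\bigl((z+\eta)^2-\frac16\bigr)\log\frac{z+\eta}{z}$; recombining with the prefactors gives \eqref{BinetG1gen} in two lines. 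You instead differentiate both sides in $\eta$, compute $(z+\eta)\psi(z+\eta)$ by integrating Hermite's formula \eqref{genBinet} by parts, and invoke \eqref{eqdG}; I have checked that your cancellations (the vanishing boundary term, the absorption of $\eta(\eta-\tfrac12)/z$, and the collapse $e^{(1-\eta)s}(e^s-1)-e^{(2-\eta)s}=-e^{(1-\eta)s}$) all go through. Both routes are valid: the paper's is shorter and avoids the digamma function entirely, while yours is more systematic and would generalize to situations where the integrand does not happen to be a pure function of $z+\eta$. The only point worth making explicit in your write-up is that integrating $\partial_\eta$ back from $0$ to $\eta$ requires the segment to stay inside $\{\Re z>0,\ \Re(z+\eta)>0\}$, which holds because that domain is convex in $\eta$ and contains $\eta=0$; equivalently, both sides are holomorphic in $\eta$ there, so matching the value at $\eta=0$ and the $\eta$-derivative suffices.
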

\begin{proof}
First, let us note the following identity
\bea
\int_0^\infty \frac{\de s}{s}
\( \frac{1}{s^2} -\frac1{12}  - \frac{e^{s}}{(e^s-1 )^2}\) e^{-zs}
&=& z\log \Gamma(z)-z\(z-\hf\)\log(z)+z^2-\frac{z}{2}\log(2\pi)-\frac{1}{12}
\nn\\
&&
-\int_0^\infty \frac{\de s}{s^2}
\(\frac12+ \frac{1}{s} +\frac{s}{12} - \frac{e^s}{e^s-1} \) e^{-z s},
\eea
which can be established by integration by parts and the use of the first Binet formula \eqref{Binet1}.
Therefore, the integral representation \eqref{BinetG1} can be rewritten as
\be
\begin{split}
\log G(z+1)=&\,\(\frac{z^2}{2}-\frac{1}{12}\) \log z -\frac{3z^2}{4}+\frac{z}{2}\log(2\pi) +\zeta'(-1)
\\
&\,+\int_0^\infty \frac{\de s}{s}
\( \frac{1}{s^2} -\frac1{12}  - \frac{e^{s}}{(e^s-1 )^2}\) e^{-zs}.
\end{split}
\label{BinetG1new}
\ee
Let us now substitute $z\to z+\eta$. Then the last integral term differs from the one in \eqref{BinetG1gen}
by the following contribution
\be
\begin{split}
\int_0^\infty \frac{\de s}{s} & \[
\( \frac{1}{s^2} -\frac1{12}\) e^{-(z+\eta) s}
-\( \frac{1}{s^2}-\frac{\eta}{s}+\frac{\eta^2}{2} -\frac1{12}\) e^{-zs}\]
\\
& \quad = \frac{3\eta^2}{4}+\frac{z\eta}{2}-\hf\((z+\eta)^2-\frac16\)\log\frac{z+\eta}{z}\, ,
\end{split}
\ee
where we used \eqref{int-expab}.
Combining this result with terms in the first line of \eqref{BinetG1new} after the replacement $z\to z+\eta$, one reproduces
the first line in \eqref{BinetG1gen}, which completes the proof.
\end{proof}

\subsection{Generalized Gamma function \label{sec_appL}}

The generalized Gamma function
 $\Lambda(z,\eta)$ is a function on $\IC^\times\times\IC$ defined by \cite{Barbieri:2018swu}
\be
\label{defLambda}
\Lambda(z,\eta) =\frac{e^z\,\Gamma(z+\eta)}{\sqrt{2\pi} z^{z+\eta-1/2}}\, .
\ee
This function has the following properties:
\begin{itemize}
\item
twisted periodicity
\be
\label{Lperiod}
\Lambda(z,\eta+1) = \frac{z+\eta}{z}\, \Lambda(z,\eta),
\ee

\item
reflection property
\be
\Lambda(-z,\eta) \, \Lambda(z,1-\eta) =
\begin{cases}
\left(1-e^{2\pi\I( z-\eta)} \right)^{-1}, \quad & \Im z>0,
\\
\left(1-e^{2\pi\I(\eta-z)} \right)^{-1}, \quad & \Im z<0,
\end{cases}
\label{shiftLam}
\ee

\item
two integral representations directly following from the generalized version of the first Binet formula \eqref{genBinet}
and the second Binet formula \eqref{Binet2}, respectively,
\bea
\log \Lambda(z,\eta)&=&
\int_0^\infty \frac{\de s}{s} \(\eta-\hf-\frac{1}{s}+\frac{e^{(1-\eta)s}}{e^s-1} \) e^{-z s}
\label{Lam-Binet}
\\
&=&\,
\(z+\eta-\hf\)\log\frac{z+\eta}{z}-\eta-\frac{z+\eta}{\pi}\int_0^\infty \de s\,\frac{ \log\(1-e^{-2\pi s}\)}{s^2+(z+\eta)^2},
\label{logL}
\eea
where the first representation is valid for $\Re z>0$, $\Re (z+\eta)>0$,
whereas the second requires only the second condition.

\item
asymptotic series at large $z$ \cite[\S8.3]{Bridgeland:2019fbi}
\be
\label{Lambdaexp}
\log\Lambda(z,\eta)  = \sum_{k= 2}^\infty\frac{(-1)^k B_k(\eta)}{k(k-1)}\, z^{1-k}.
\ee
It follows from \eqref{Lam-Binet} by using
$\frac{s e^{x s}}{e^s-1}=\sum_{k=0}^{\infty} \frac{s^k}{k!} B_k(x)$ and integrating term by term
using $\int_0^\infty s^n e^{-zs}\de s= n!/ z^{n+1}$.
Note that this asymptotic expansion is consistent with the periodicity relation \eqref{Lperiod},
since $B_k(\eta+1)-B_k(\eta)=k \eta^{k-1}$.

\end{itemize}

\subsection{Generalized Barnes function \label{sec_appU}}

We define the generalized Barnes function $\Upsilon(z,\eta)$ on $\IC^\times \times \IC$ by
\be
\label{defUpsilon}
\Upsilon(z,\eta) = \frac{e^{\frac34z^2- \zeta'(-1)}\, G(z+\eta+1)  }
{(2\pi)^{z/2} \,z^{\frac12 z^2} \bigl[\Gamma(z+\eta) \bigr]^\eta}\, ,
\ee
where $G(z)$ is the Barnes function.
This definition generalizes the function
$\Upsilon(z)$ introduced  in \cite[Eq.(17)]{Bridgeland:2016nqw}
which is obtained from \eqref{defUpsilon} setting $\eta=0$.
It has the following properties:
\begin{itemize}
\item
relation to the generalized Gamma function \eqref{defLambda}
\be
\label{dUps}
\frac{\p}{\p z}  \log \Upsilon(z,\eta) = z \frac{\p}{\p z}  \log \Lambda(z,\eta),
\ee
which follows from the property of the Barnes function \eqref{eqdG},
\item
twisted periodicity
\be
\label{Uperiod}
\Upsilon(z,\eta+1) = (z+\eta)^{-\eta} \Upsilon(z,\eta),
\ee

\item
reflection property
\be
\log\frac{\Upsilon(-z,\eta)}{\Upsilon(z,1-\eta)} =  -\frac{\pi\I}{12}\,B_2(\eta)+
\begin{cases}
z\log\left(1-e^{2\pi\I( z-\eta)} \right)+\frac{1}{2\pi\I}\, \Li_2\left( e^{2\pi\I( z-\eta)} \right), & \Im z>0
\\
z\log \left(1-e^{2\pi\I(\eta-z)} \right) -\frac{1}{2\pi\I}\, \Li_2\left(  e^{2\pi\I(\eta-z)}\right),
 & \Im z<0,
\end{cases}
\label{Yreflect}
\ee
which is a consequence of \eqref{reflectG},

\item
two integral representations of Binet type
\bea
\log\Upsilon(z,\eta) &=&
\int_0^\infty \frac{\de s}{s} \( \frac{1}{s^2} - \frac12 \,B_2(\eta)
- \frac{\eta(e^s-1)+1}{(e^s-1)^2}\,e^{(1-\eta)s} \)  e^{-z s}
\label{BinetUps1}\\
&& -\frac{1}{2} \,B_2(\eta) \log z
\nn
\\
&=& \frac{z^2}{2}\, \log \frac{z+\eta}{z}-\hf\,B_2(\eta)\, \log (z+\eta)-\frac{z\eta}{2}+\frac{\eta^2}{4}
\label{BinetUps2}\\
&&
+\int_0^{\infty} \frac{\de s }{s^2+(z+\eta)^2}
\left[ \frac{1}{\pi}\(s^2+\eta(z+\eta)\)  \log\(1-e^{-2\pi s} \) - \frac{s}{2\pi^2}\, \Li_2\(e^{-2\pi s} \)  \right],
\nn
\eea
directly following from \eqref{BinetG1gen}, \eqref{genBinet} and
\eqref{BinetG2}, \eqref{Binet2}, respectively,

\item
asymptotic series at large $z$
\be
\label{Upsexp}
\log\Upsilon(z,\eta) = -\frac{1}{2}\, B_2(\eta) \log z
+ \sum_{k= 3}^\infty  \frac{ (-1)^k B_k(\eta)}{k(k-2)}\, z^{2-k},
\ee
which follows from \eqref{BinetUps1} taking into account
\be
 \frac{1}{s^2} - \frac12\, B_2(\eta) - \frac{\eta(e^s-1)+1}{(e^s-1)^2}\,e^{(1-\eta)s}
 = \sum_{k=3}^{\infty} \frac{(k-1)}{k!}\, B_k(1-\eta)\, s^{k-2}
\ee
and integrating term by term.
For $\eta=0$, this reduces to the result given in \cite[\S 5.4]{Bridgeland:2016nqw}.

\end{itemize}

\section{Solution in the uncoupled finite case}
\label{ap-finite}

Here we provide some intermediate steps in deriving \eqref{Xuncoupled} and \eqref{taufun-res2}.

\subsection{Darboux coordinates}
\label{sec-Xfinite}

In the uncoupled case, $\cX_{\gamma'}$ appearing in the r.h.s. of the integral equation \eqref{TBAeq}
can be replaced by $\cXsf_{\gamma'}$. Then, after combining contributions of opposite charges,
this equation takes the form
\be
\label{Xfinite}
\cX_\gamma(t) = \cXsf_{\gamma}(t) \,\exp\Biggl[ \sum_{\gamma'\in \Gamma \atop \Re(Z_{\gamma'}/t)>0}
\Omega(\gamma') \langle \gamma,\gamma'\rangle  R_{\gamma'}(t)\Biggr],
\ee
where
\be
R_{\gamma}(t)
=\frac{1}{2\pi\I}\int_{-\I\vth_\gamma}^{\infty} \de s \frac{\log\(1-e^{-2\pi s}\)}{s-\I (Z_\gamma /t-\vth_\gamma)}
-\frac{1}{2\pi\I}\int_{\I\vth_\gamma}^{\infty}\frac{\log\(1-e^{-2\pi s}\)}{s+\I (Z_\gamma /t-\vth_\gamma)}
\ee
and we changed the integration variable $t'=\I Z_{\gamma}/(s+\I\vth_\gamma)$.
Splitting each integral into two parts, one obtains
\be
\begin{split}
R_{\gamma}(t)
=&\, \frac{Z_\gamma /t-\vth_\gamma}{\pi}\int_0^{\infty}\de s\, \frac{\log\(1-e^{-2\pi s}\)}{s^2+(Z_\gamma /t-\vth_\gamma)^2}
\\
&\,
-\frac{1}{2\pi\I}\int_0^{-\I\vth_\gamma} \de s\,\frac{\log\(1-e^{-2\pi s}\)}{s-\I(Z_\gamma /t-\vth_\gamma)}
+\frac{1}{2\pi\I}\int_0^{-\I\vth_\gamma} \de s\, \frac{\log\(1-e^{2\pi s}\)}{s-\I(Z_\gamma /t-\vth_\gamma)}\, ,
\end{split}
\label{Reval}
\ee
where in the last term we flipped the sign of $s$. The last two terms can be combined using \eqref{Li-ident} at $n=1$,
leading to the contribution
\be
-\int_0^{-\I\vth_\gamma} \de s\,\frac{[\I s]-\hf}{s-\I(Z_\gamma /t-\vth_\gamma)}\, .
\ee
If $\Re \vth_\gamma\in (0,1)$, the integral is easily evaluated to
\be
-\(Z_\gamma /t-\vth_\gamma+\hf\)\log\(1-\frac{t\vth_\gamma}{Z_\gamma}\)-\vth_\gamma\, .
\ee
Combining this with the first term in \eqref{Reval}, one recognizes the integral representation \eqref{logL}
of the generalized Gamma function shifted by a logarithmic term
\be
-\log \Lambda\(\frac{Z_\gamma}{t}, -\vth_\gamma\)-\log\(1-\frac{t\vth_\gamma}{Z_\gamma}\)
=-\log \Lambda\(\frac{Z_\gamma}{t},1-\vth_\gamma\),
\label{Rincomplet}
\ee
where in the last step we used the twisted periodicity property \eqref{Lperiod}.
If however $\Re\vth_\gamma\notin (0,1)$, there is an additional contribution
given by
\be
\eps_\gamma\sum_{n\in S_{\eps_\gamma}}\int_{-\I n}^{-\I\vth_\gamma}\frac{\de s}{s-\I(Z_\gamma /t-\vth_\gamma)}
=-\eps_\gamma \sum_{n\in S_{\eps_\gamma}}\log\(1-\frac{t(\vth_\gamma-n)}{Z_\gamma}\),
\label{contr-brack}
\ee
where $\eps_\gamma=\sgn(\Re\vth_\gamma)$ and we introduced two finite sets
$S_+=\{1,\dots, \fl{\Re \vth_\gamma}\}$ and $S_-=\{0,\dots, 1+\fl{\Re \vth_\gamma}\}$.
Adding this contribution to \eqref{Rincomplet} and again using repeatedly the twisted periodicity \eqref{Lperiod},
one finally obtains
\be
R(t) =-\log \Lambda\(\frac{Z_\gamma}{t},1-[\vth_\gamma]\).
\ee
Substituting this result into \eqref{Xfinite}, one gets the formula \eqref{Xuncoupled} given in the main text.

\subsection{$\tau$ function}
\label{sec_Apptau}

The starting point for deriving the $\tau$ function is the formula \eqref{restau}.
In terms of the integer BPS indices (and after discarding an irrelevant function of $\theta^a$
coming from $\log(\pm n)$ in the last term where $n$ counts the reducibility of the charge),
it takes the form
\bea
\log\tau &=& \frac{1}{4\pi^2}\sum_{\gamma\in\Gamma_{\!\star}}
\biggl[\Omega(\gamma)\int_{\ell_\gamma} \frac{\de t'}{t'}\, \frac{t}{t'-t}\,
\Bigl( \Li_2\(e^{2\pi\I(\vth_\gamma- Z_\gamma/t')}\)
-2\pi\I\, \frac{Z_\gamma}{t'}\,\log\(1-e^{2\pi\I(\vth_\gamma- Z_\gamma/t')}\)\Bigl)
\nn\\
&&
-\Li_2\(e^{2\pi\I\vth_\gamma}\)\log (Z_\gamma/t)\biggr].
\label{logtau}
\eea
The next steps are similar to the one in the previous appendix.
Setting $t'=\I Z_{\gamma}/(s+\I\vth_\gamma)$ and combining the contributions of opposite charges, one finds
\bea
\log\tau
&=& -\frac{1}{4\pi^2}\sum_{\gamma\in\Gamma_{\!\star}}
\sigma_\gamma\Omega(\gamma)\biggl[ \int_{-\I\vth_\gamma}^\infty \de s\,
\frac{\Li_2\(e^{-2\pi s}\) -2\pi(s+\I\vth_\gamma)\,\log\(1-e^{-2\pi s}\)}
{s-\I(Z_\gamma /t-\vth_\gamma)}
\nn\\
&&\qquad
+\Li_2\(e^{2\pi\I\vth_\gamma}\)\log (Z_\gamma/t)\biggr]
\nn\\
&=&\sum_{\gamma\in \Gamma \atop \Re(Z_{\gamma}/t)>0}
\sigma_\gamma\Omega(\gamma)\[
\int_{0}^\infty \de s \frac{-\frac{s}{2\pi^2 }\, \Li_2\(e^{-2\pi s}\)
+\frac{1}{\pi}\(s^2+\vth_\gamma(\vth_\gamma-Z_\gamma/t)\)\log\(1-e^{-2\pi s}\)}
{s^2+(Z_\gamma/t-\vth_\gamma)^2}
\right.
\nn\\
&&\left.
+\hf\int_0^{-\I\vth_\gamma} \de s\,
\frac{B_2([\I s]) -(\I s-\vth_\gamma)(2[\I s]-1)}
{s-\I(Z_\gamma /t-\vth_\gamma)}
-\hf\, B_2([\vth_\gamma])\,\log (Z_\gamma/t)\].
\label{compute-tau}
\eea
For $\Re \vth_\gamma\in (0,1)$, the second integral is evaluated to
\be
\frac14\, \vth_\gamma^2 +\frac{\vth_\gamma Z_\gamma}{2t }
+\hf\(\frac{Z_\gamma^2}{t^2}-B_2(\vth_\gamma)\)\log\(1-\frac{t\vth_\gamma}{Z_\gamma}\).
\ee
Substituting this back into \eqref{compute-tau}, one recognizes the square bracket as
the integral representation \eqref{BinetUps2} of the generalized Barnes function $\Upsilon(z,\eta)$
shifted by a logarithmic term
\be
\log \Upsilon\(\frac{Z_\gamma}{t}, -\vth_\gamma\)+\vth_\gamma \log\(Z_\gamma/t-\vth_\gamma\)
=\log \Upsilon\(\frac{Z_\gamma}{t},1-\vth_\gamma\),
\label{Tau-incomplet}
\ee
where in the last step we used the twisted periodicity property \eqref{Uperiod}.
For $\Re\vth_\gamma\notin (0,1)$, there are additional contributions which can be computed as in \eqref{contr-brack}
and amount to replacing $\vth_\gamma$ by $[\vth_\gamma]$, consistently with the periodicity of the original integral representation.
As a result, one arrives at the formula \eqref{taufun-res2}.

\providecommand{\href}[2]{#2}\begingroup\raggedright\endgroup

%\bibliography{combined}

\begin{thebibliography}{10}

\bibitem{Plebanski:1975wn}
J.~F. Pleba\'nski, ``{Some solutions of complex Einstein equations},'' {\em J.
  Math. Phys.} {\bf 16} (1975)
2395--2402.
%%CITATION = JMAPA,16,2395;%%.

\bibitem{Przanowski:1984qq}
M.~Przanowski, ``{Locally Hermite Einstein, selfdual gravitational
  instantons},'' {\em Acta Phys. Polon.} {\bf B14} (1983)
625--627.
%%CITATION = APPOA,B14,625;%%.

\bibitem{Adamo:2021bej}
T.~Adamo, L.~Mason, and A.~Sharma, ``{Twistor sigma models for quaternionic
  geometry and graviton scattering},''
  \href{http://www.arXiv.org/abs/2103.16984}{{\tt 2103.16984}}.

\bibitem{Seiberg:1996nz}
N.~Seiberg and E.~Witten, ``{Gauge dynamics and compactification to
  three-dimensions},'' in {\em {Conference on the Mathematical Beauty of
  Physics (In Memory of C. Itzykson)}}.
\newblock 6, 1996.
\newblock \href{http://www.arXiv.org/abs/hep-th/9607163}{{\tt hep-th/9607163}}.

\bibitem{Gaiotto:2008cd}
D.~Gaiotto, G.~W. Moore, and A.~Neitzke, ``{Four-dimensional wall-crossing via
  three-dimensional field theory},'' {\em Commun.Math.Phys.} {\bf 299} (2010)
  163--224, \href{http://www.arXiv.org/abs/0807.4723}{{\tt 0807.4723}}.

\bibitem{Douglas:2000gi}
M.~R. Douglas, ``{D-branes, categories and N = 1 supersymmetry},'' {\em J.
  Math. Phys.} {\bf 42} (2001) 2818--2843,
\href{http://www.arXiv.org/abs/hep-th/0011017}{{\tt hep-th/0011017}}.
%%CITATION = HEP-TH/0011017;%%.

\bibitem{Alexandrov:2008gh}
S.~Alexandrov, B.~Pioline, F.~Saueressig, and S.~Vandoren, ``{D-instantons and
  twistors},'' {\em JHEP} {\bf 03} (2009) 044,
\href{http://www.arXiv.org/abs/0812.4219}{{\tt 0812.4219}}.
%%CITATION = 0812.4219;%%.

\bibitem{Alexandrov:2009zh}
S.~Alexandrov, ``{D-instantons and twistors: some exact results},'' {\em J.
  Phys.} {\bf A42} (2009) 335402,
\href{http://www.arXiv.org/abs/0902.2761}{{\tt 0902.2761}}.
%%CITATION = 0902.2761;%%.

\bibitem{Alexandrov:2011ac}
S.~Alexandrov, D.~Persson, and B.~Pioline, ``{Wall-crossing, Rogers
  dilogarithm, and the QK/HK correspondence},'' {\em JHEP} {\bf 1112} (2011)
  027,
\href{http://www.arXiv.org/abs/1110.0466}{{\tt 1110.0466}}.
%%CITATION = ARXIV:1110.0466;%%.

\bibitem{Alexandrov:2011va}
S.~Alexandrov, ``{Twistor Approach to String Compactifications: a Review},''
  {\em Phys.Rept.} {\bf 522} (2013) 1--57,
\href{http://www.arXiv.org/abs/1111.2892}{{\tt 1111.2892}}.
%%CITATION = ARXIV:1111.2892;%%.

\bibitem{Alexandrov:2013yva}
S.~Alexandrov, J.~Manschot, D.~Persson, and B.~Pioline, ``{Quantum
  hypermultiplet moduli spaces in N=2 string vacua: a review},'' in {\em
  {Proceedings, String-Math 2012, Bonn, Germany, July 16-21, 2012}},
  pp.~181--212.
\newblock 2013.
\newblock
\href{http://www.arXiv.org/abs/1304.0766}{{\tt 1304.0766}}.
\newblock
%%CITATION = ARXIV:1304.0766;%%.

\bibitem{Bridgeland:2019fbi}
T.~Bridgeland, ``{Geometry from Donaldson-Thomas invariants},''
  \href{http://www.arXiv.org/abs/1912.06504}{{\tt 1912.06504}}.

\bibitem{Bridgeland:2020zjh}
T.~Bridgeland and I.~A.~B. Strachan, ``{Complex hyperK\"ahler structures
  defined by Donaldson-Thomas invariants},''
  \href{http://www.arXiv.org/abs/2006.13059}{{\tt 2006.13059}}.

\bibitem{Gaiotto:2014bza}
D.~Gaiotto, ``{Opers and TBA},'' \href{http://www.arXiv.org/abs/1403.6137}{{\tt
  1403.6137}}.

\bibitem{MR2373143}
T.~Bridgeland, ``Stability conditions on triangulated categories,'' {\em Ann.
  of Math. (2)} {\bf 166} (2007), no.~2, 317--345.

\bibitem{bridgeland2015quadratic}
T.~Bridgeland and I.~Smith, ``Quadratic differentials as stability
  conditions,'' {\em Publications math{\'e}matiques de l'IH{\'E}S} {\bf 121}
  (2015), no.~1, 155--278, \href{http://www.arXiv.org/abs/1302.7030}{{\tt
  1302.7030}}.

\bibitem{Dunajski:2020qhh}
M.~Dunajski, ``{Null K\"ahler geometry and isomonodromic deformations},''
  \href{http://www.arXiv.org/abs/2010.11216}{{\tt 2010.11216}}.

\bibitem{Bridgeland:2017vbr}
T.~Bridgeland, ``{Riemann--Hilbert problems for the resolved conifold and
  non-perturbative partition functions},'' {\em Journal of Differential
  Geometry} {\bf 115} (3, 2020) 395--435,
  \href{http://www.arXiv.org/abs/1703.02776}{{\tt 1703.02776}}.

\bibitem{joyce2007holomorphic}
D.~Joyce, ``Holomorphic generating functions for invariants counting coherent
  sheaves on calabi--yau 3--folds,'' {\em Geometry \& Topology} {\bf 11}
  (2007), no.~2, 667--725.

\bibitem{Filippini:2014sza}
S.~A. Filippini, M.~Garcia-Fernandez, and J.~Stoppa, ``{Stability data,
  irregular connections and tropical curves},'' {\em Selecta Mathematica} {\bf
  23} (3, 2017) 1355--1418, \href{http://www.arXiv.org/abs/1403.7404}{{\tt
  1403.7404}}.

\bibitem{Bridgeland:2016nqw}
T.~Bridgeland, ``{Riemann-Hilbert problems from Donaldson-Thomas theory},''
  {\em Invent. Math.} {\bf 216} (2019) 69--124,
  \href{http://www.arXiv.org/abs/1611.03697}{{\tt 1611.03697}}.

\bibitem{Alexandrov:2008nk}
S.~Alexandrov, B.~Pioline, F.~Saueressig, and S.~Vandoren, ``{Linear
  perturbations of quaternionic metrics},'' {\em Commun. Math. Phys.} {\bf 296}
  (2010) 353--403,
\href{http://www.arXiv.org/abs/0810.1675}{{\tt 0810.1675}}.
%%CITATION = 0810.1675;%%.

\bibitem{Alexandrov:2010pp}
S.~Alexandrov and P.~Roche, ``{TBA for non-perturbative moduli spaces},'' {\em
  JHEP} {\bf 1006} (2010) 066, \href{http://www.arXiv.org/abs/1003.3964}{{\tt
  1003.3964}}.

\bibitem{Alexandrov:2014wca}
S.~Alexandrov, G.~W. Moore, A.~Neitzke, and B.~Pioline, ``{$\mathbb R^3$ Index
  for Four-Dimensional $N=2$ Field Theories},'' {\em Phys. Rev. Lett.} {\bf
  114} (2015) 121601,
\href{http://www.arXiv.org/abs/1406.2360}{{\tt 1406.2360}}.
%%CITATION = ARXIV:1406.2360;%%.

\bibitem{Alexandrov:2018lgp}
S.~Alexandrov and B.~Pioline, ``{Black holes and higher depth mock modular
  forms},'' {\em Commun. Math. Phys.} {\bf 374} (2019), no.~2, 549--625,
\href{http://www.arXiv.org/abs/1808.08479}{{\tt 1808.08479}}.
%%CITATION = ARXIV:1808.08479;%%.

\bibitem{bridgeland2013stokes}
T.~Bridgeland and V.~T. Laredo, ``Stokes factors and multilogarithms,'' {\em
  Journal f{\"u}r die reine und angewandte Mathematik} {\bf 2013} (2013),
  no.~682, 89--128.

\bibitem{Barbieri:2018swu}
A.~Barbieri, ``{A Riemann-Hilbert problem for uncoupled BPS structures},'' {\em
  Manuscripta Mathematica} (2, 2019) 1--21,
  \href{http://www.arXiv.org/abs/1802.07655}{{\tt 1802.07655}}.

\bibitem{Alexandrov:2019rth}
S.~Alexandrov, J.~Manschot, and B.~Pioline, ``{S-duality and refined BPS
  indices},'' {\em Commun. Math. Phys.} {\bf 380} (2020), no.~2, 755--810,
  \href{http://www.arXiv.org/abs/1910.03098}{{\tt 1910.03098}}.

\bibitem{Cecotti:2014wea}
S.~Cecotti, A.~Neitzke, and C.~Vafa, ``{Twistorial topological strings and a
  $\mathrm{tt}^*$ geometry for $\mathcal{N} = 2$ theories in $4d$},'' {\em Adv.
  Theor. Math. Phys.} {\bf 20} (2016) 193--312,
\href{http://www.arXiv.org/abs/1412.4793}{{\tt 1412.4793}}.
%%CITATION = ARXIV:1412.4793;%%.

\bibitem{Barbieri:2019yya}
A.~Barbieri, T.~Bridgeland, and J.~Stoppa, ``{A quantized Riemann-Hilbert
  problem in Donaldson-Thomas theory},''
  \href{http://www.arXiv.org/abs/1905.00748}{{\tt 1905.00748}}.

\bibitem{ks}
M.~Kontsevich and Y.~Soibelman, ``{Stability structures, motivic
  Donaldson-Thomas invariants and cluster transformations},''
  \href{http://www.arXiv.org/abs/0811.2435}{{\tt 0811.2435}}.

\bibitem{Hitchin:1986ea}
N.~J. Hitchin, A.~Karlhede, U.~Lindstr{\"o}m, and M.~Ro\v{c}ek,
  ``Hyperk{\"a}hler metrics and supersymmetry,'' {\em Commun. Math. Phys.} {\bf
  108} (1987)
535.
%%CITATION = CMPHA,108,535;%%.

\bibitem{Alexandrov:2008ds}
S.~Alexandrov, B.~Pioline, F.~Saueressig, and S.~Vandoren, ``{Linear
  perturbations of Hyperkahler metrics},'' {\em Lett. Math. Phys.} {\bf 87}
  (2009) 225--265,
\href{http://www.arXiv.org/abs/0806.4620}{{\tt 0806.4620}}.
%%CITATION = 0806.4620;%%.

\bibitem{bridgeland2012introduction}
T.~Bridgeland, ``{An introduction to motivic Hall algebras},'' {\em Advances in
  Mathematics} {\bf 229} (2012), no.~1, 102--138.

\bibitem{chen1977iterated}
K.-T. Chen, ``Iterated path integrals,'' {\em Bulletin of the American
  Mathematical Society} {\bf 83} (1977), no.~5, 831--879.

\bibitem{Goncharov:1998kja}
A.~B. Goncharov, ``{Multiple polylogarithms, cyclotomy and modular
  complexes},'' {\em Math. Res. Lett.} {\bf 5} (1998) 497--516,
  \href{http://www.arXiv.org/abs/1105.2076}{{\tt 1105.2076}}.

\bibitem{Alexandrov:2017qhn}
S.~Alexandrov, S.~Banerjee, J.~Manschot, and B.~Pioline, ``{Multiple
  D3-instantons and mock modular forms II},'' {\em Commun. Math. Phys.} {\bf
  359} (2018), no.~1, 297--346,
\href{http://www.arXiv.org/abs/1702.05497}{{\tt 1702.05497}}.
%%CITATION = ARXIV:1702.05497;%%.

\bibitem{Strachan:1992em}
I.~A.~B. Strachan, ``{The Moyal algebra and integrable deformations of the
  selfdual Einstein equations},'' {\em Phys. Lett. B} {\bf 283} (1992) 63--66.

\bibitem{Takasaki:1992jf}
K.~Takasaki, ``{Dressing operator approach to Moyal algebraic deformation of
  selfdual gravity},'' {\em J. Geom. Phys.} {\bf 14} (1994) 111--120,
  \href{http://www.arXiv.org/abs/hep-th/9212103}{{\tt hep-th/9212103}}.

\bibitem{whittaker2020course}
E.~T. Whittaker and G.~N. Watson, {\em A course of modern analysis}.
\newblock University Press, 1927.

\bibitem{nemes2013generalization}
G.~Nemes, ``Generalization of binet's gamma function formulas,'' {\em Integral
  Transforms and Special Functions} {\bf 24} (2013), no.~8, 597--606.

\bibitem{Vigneras1979}
M.-F. Vign\'eras, ``{L'\'equation fonctionnelle de la fonction z\^eta de
  Selberg du groupe modulaire PSL $(2,\mathbb{Z})$},'' {\em Ast\'erisque}
  (1979), no.~61, 235--249.

\bibitem{Adamchik2003}
V.~S. Adamchik, ``{Contributions to the theory of the Barnes function},'' {\em
  Int. J. Math. Comput. Sci} {\bf 9} (2014), no.~1, 11--30,
  \href{http://www.arXiv.org/abs/math/0308086}{{\tt math/0308086}}.

\bibitem{Alexandrov:2021prq}
S.~Alexandrov and B.~Pioline,
``{Conformal TBA for resolved conifolds}",
\href{http://www.arXiv.org/abs/2106.12006}{{\tt 2106.12006}}.

\end{thebibliography}
%\bibliographystyle{utphys}

\end{document}